\documentclass{article}

\usepackage{microtype}
\usepackage{graphicx}
\usepackage{subcaption}
\usepackage{booktabs} 

\usepackage{hyperref}

\usepackage[accepted]{icml2026}

\usepackage[OT1]{fontenc} 
\usepackage{indentfirst} 
\usepackage{booktabs}
\usepackage{amsfonts}
\usepackage{amsmath}
\usepackage{multirow}
\usepackage{amsthm}
\usepackage{diagbox}
\usepackage{subcaption}
\usepackage{bm}
\usepackage{color}
\usepackage{graphicx}
\usepackage{enumitem}
\usepackage{algorithm}
\usepackage{algorithmic} 
\usepackage{mathtools}
\usepackage{tablefootnote}
\usepackage{thmtools} 
\usepackage{thm-restate}
\usepackage[table]{xcolor}
\definecolor{pku-red}{RGB}{139,0,18}
\definecolor{emb}{RGB}{209,228,252}
\definecolor{hidden-blue}{RGB}{194,232,247}
\definecolor{hidden-orange}{RGB}{224,224,224}
\definecolor{hidden-yellow}{RGB}{242,244,193}
\definecolor{output-purple}{RGB}{219,203,231}
\definecolor{output-green}{RGB}{204,231,207}
\definecolor{hiddendraw}{RGB}{10,128,122}
\definecolor{lightgray}{gray}{0.95}
\usepackage{adjustbox}

\theoremstyle{plain}
\newtheorem{theorem}{Theorem}[section]

\newtheorem{lemma}[theorem]{Lemma}

\newtheorem{example}[theorem]{Example}

\theoremstyle{definition}
\newtheorem{definition}[theorem]{Definition}

\theoremstyle{remark}
\newtheorem{remark}[theorem]{Remark}

\newcommand{\supp}{{\text{supp}}}
\newcommand{\rev}{{\text{REV}}}
\newcommand{\ama}{\text{AMA}}
\newcommand{\dama}{\text{D-AMA}}
\newcommand{\sama}{\text{S-AMA}}

\newcommand{\ca}{\text{CA}}
\newcommand{\dca}{\text{D-CA}}
\newcommand{\sca}{\text{S-CA}}

\newcommand{\core}{\text{Cor}}
\newcommand{\regretir}{\text{Regret}_{\text{IR}}}
\newcommand{\asw}{\text{asw}}

\newcommand{\bmv}{{\bm{v}}}
\newcommand{\bmw}{{\bm{w}}}

\DeclareMathOperator*{\argmax}{argmax}

\usepackage{amsmath}
\usepackage{amssymb}
\usepackage{mathtools}
\usepackage{amsthm}

\usepackage[capitalize,noabbrev]{cleveref}

\usepackage[textsize=tiny]{todonotes}

\icmltitlerunning{Enhancing Affine Maximizer Auctions with Correlation-Aware Payment}

\begin{document}

\twocolumn[
  \icmltitle{Enhancing Affine Maximizer Auctions with Correlation-Aware Payment}



  \icmlsetsymbol{equal}{*}





  \begin{icmlauthorlist}
    \icmlauthor{Haoran Sun}{1}
    \icmlauthor{Xuanzhi Xia}{2}
    \icmlauthor{Xu Chu}{1}
    \icmlauthor{Xiaotie Deng}{1}
  \end{icmlauthorlist}

  \icmlaffiliation{1}{CFCS, School of Computer Science, Peking University}
  \icmlaffiliation{2}{Department of Computer Science and Technology, Tsinghua University}

  \icmlcorrespondingauthor{Xu Chu}{chu\_xu@pku.edu.cn}
  \icmlcorrespondingauthor{Xiaotie Deng}{xiaotie@pku.edu.cn}

  \icmlkeywords{Machine Learning, ICML}

  \vskip 0.3in
]



\printAffiliationsAndNotice{}  

\begin{abstract}
Affine Maximizer Auctions (AMAs), a generalized mechanism family from VCG, are widely used in automated mechanism design due to their inherent dominant-strategy incentive compatibility (DSIC) and individual rationality (IR). However, as the payment form is fixed, AMA's expressiveness is restricted, especially in distributions where bidders' valuations are correlated. In this paper, we propose Correlation-Aware AMA (CA-AMA), a novel framework that augments AMA with a new correlation-aware payment. We show that any CA-AMA preserves the DSIC property and formalize finding optimal CA-AMA as a constraint optimization problem subject to the IR constraint. Then, we theoretically characterize scenarios where classic AMAs can perform arbitrarily poorly compared to the optimal revenue, while the CA-AMA can reach the optimal revenue. For optimizing CA-AMA, we design a practical two-stage training algorithm. We derive that the target function's continuity and the generalization bound on the degree of deviation from strict IR. Finally, extensive experiments showcase that our algorithm can find an approximate optimal CA-AMA in various distributions with improved revenue and a low degree of violation of IR. 
\end{abstract}




\section{Introduction}\label{sec:intro}

Recently, differentiable economics~\cite{MenuNet,dutting2023optimal,LotteryAMA,GemNet} has attracted significant attention within automated mechanism design as it can discover auctions that demonstrate superior empirical performance. In revenue-maximizing auction design, existing methods are broadly categorized into two classes: (1) \emph{characterization-free} methods, which directly employ neural networks to approximate auction mechanisms~\cite{rahme2021auction,peri2021preferencenet,duan2022context,ivanov2022optimal,dutting2023optimal}, and (2) \emph{characterization-based} methods, which optimize within structured mechanism families with well-defined game-theoretic properties~\cite{MenuNet,LotteryAMA,AMenuNet,ODVVCA,GemNet}. Among the latter, Affine Maximizer Auctions (AMAs), a family of mechanisms extended from Vickery-Clarke-Groves (VCG)~\cite{VCG,karp1972}, are particularly notable for inherently guaranteeing dominant-strategy incentive compatibility (DSIC), individual rationality (IR), and preventing over-allocation. Recent work on optimizing AMAs has demonstrated strong empirical performance and computational efficiency~\cite{LotteryAMA,AMenuNet,ODVVCA}.

However, prior AMA-based methods have primarily focused on evaluations under bidder-independent distributions, where the expressive limitations of VCG-style payment rules may not be fully apparent. In certain bidder-correlated settings, this VCG-style payment rule exhibits a critical constraint: \emph{a bidder's payment can only be a non-decreasing function of other bidders' valuations}. This inherent limitation significantly reduces their payment flexibility compared to characterization-free approaches~\cite{dutting2023optimal} or methods that use more expressive mechanism families, such as menu-based mechanisms~\cite{MenuNet, GemNet}. To illustrate this limitation, we present a simple example where AMAs fail to achieve optimal revenue.

\begin{example}\label{example:ama_failure}
    Consider a single-item auction with two bidders where valuations are perfectly negatively correlated ($v_1 = 1 - v_2$) and each marginal valuation is drawn uniformly from $[0,1]$.
\end{example}

In this setting, the optimal DSIC and IR mechanism extracts the full surplus by setting a personalized reserve price for bidder 1 of $1 - v_2$ and for bidder 2 of $1 - v_1$. 
In this auction, bidder 1 wins if $v_1 > 0.5$ and pays $1 - v_2$, which is a decreasing function of $v_2$. 
However, in the AMA framework, the payment is structured differently. Let $\lambda_0, \lambda_1, \lambda_2$ be the boost variables for allocating the item to no one, bidder 1, and bidder 2, respectively, and let the weights $w_1 = w_2 = 1$ (as two bidders are symmetric). When bidder 1 wins, their payment is $p_1 = \max \{v_2 + \lambda_2, \lambda_1, \lambda_0\} - \lambda_1$. As this payment is necessarily a non-decreasing function of $v_2$, no AMA can replicate the optimal mechanism. 
Although introducing randomness can mitigate this limitation~\cite{LotteryAMA,AMenuNet}, it also creates a non-zero probability of reserving the item, which results in revenue loss.

Motivated by this, we aim to enhance AMA's expressiveness in bidder-correlated settings with minimal modification. In this study, we introduce the \underline{C}orrelation-\underline{A}ware \underline{A}ffine \underline{M}aximizer \underline{A}uction (CA-AMA), which incorporates an additional correlation-aware payment term, $p_i^\core$, for each bidder. By setting $p_i^\core$ to be independent of bidder $i$'s bid, CA-AMA inherently preserves the DSIC property. We formalize the problem of identifying the optimal CA-AMA as an optimization problem subject to IR constraints. Theoretically, we demonstrate that in single-item auctions under certain distributions, CA-AMA can achieve optimal revenue where standard AMAs perform arbitrarily poorly. We then propose a two-stage training algorithm for optimizing CA-AMA, whose feasibility is supported by the continuity of the target function and a generalization bound on the degree of IR violation. Finally, we conduct extensive experiments across various distributions in single-item and multi-item auctions. The results demonstrate that our algorithm effectively finds an approximately IR mechanism that achieves significantly improved revenue compared to standard AMAs.

The remainder of this paper is organized as follows. Section~\ref{sec:pre} first introduces the necessary preliminaries. Section~\ref{sec:theoretical} then demonstrates the limitations of standard AMAs and proposes the CA-AMA framework. Following this, Section~\ref{sec:optimization} details the optimization of CA-AMA. We present experimental results in Section~\ref{sec:experiment} and conclude the paper in Section~\ref{sec:conclusion}.

\section{Preliminary}\label{sec:pre}
We consider the sealed-bid auction with $n$ bidders $[n] = \{1, 2, \dots, n\}$ and $m$ items $[m] = \{1, 2, \dots, m\}$. 
Each bidder $i$ has a private valuation on all item combinations, denoted by $\bm{v}_i = (v_{is})_{s \subseteq [m]}$, where $v_{is}$ is the bidder's valuation of an item combination $s \subseteq [m]$. 
We mainly consider the \emph{additive} valuation, \emph{i.e.}, $v_{is} = \sum_{j \in s} v_{ij}$ for all $i \in [n]$ and $s \subseteq [m]$.
So a bidder's valuation is expressed by $\bm v_i = (v_{ij})_{j \in [m]}$.

A valuation profile $V = (\bm v_1, \bm v_2, \dots, \bm v_n)$ is a collection of all bidders' valuations.
We assume that $V$ has an underlying distribution $\mathcal F$ and the support is bounded, $\supp(F) \subseteq [0, 1]^{n \times m}$.
In an auction, each bidder $i$ reports a bid $\bm b_i$, which does not necessarily equal its real valuation $\bm{v}_i$.
The auctioneer does not know the true valuation profile $V$ nor the distribution $\mathcal F$ but can observe the bidding profile $B = (\bm b_1, \bm b_2, \dots, \bm b_n)$.
We use $V_{-i} = (\bmv_1,\dots, \bmv_{i-1}, \bmv_{i+1}, \dots, \bmv_n)$ to represent the valuation profile except for bidder $i$, and $B_{-i}$ with the similar meaning. 
The marginal distribution is represented by $\mathcal F_{i}(V_{-i})$ for bidder $i$'s valuation.
When the bidders are \emph{independent}, this marginal distribution does not depend on $V_{-i}$, which means that $\mathcal F_{i}(V_{-i}) \equiv \mathcal F_i$ for any $V_{-i}$. 
When the bidders' valuation distributions are \emph{correlated}, such a relationship does not hold. 

\subsection{Revenue-Maximizing Auction Design}
An auction mechanism $(g, p)$ consists of an allocation rule $g$ and a payment rule $p$. 
For a given bidding profile $B$, $g(B) \subseteq [0,1]^{n \times m}$ is the allocation matrix. 
The allocation rule has to satisfies that $\sum_{i=1}^n g(B)_{ij} \leq 1$ for any $j \in [m]$. 
If the mechanism is \emph{deterministic}, we further restrict that the allocation matrix $g(B)_{ij} \subseteq \{0,1\}$ for all $i$ and $j$.  
The payment rule $p_i(B) \geq 0$ determines the value the bidder $i$ has to pay. 
Following the literature~\cite{AMenuNet,dutting2023optimal}, we assume that the bidders are utility maximizers and have quasi-linear utility.
For a mechanism $(g, p)$, the utility of bidder $i$ with true valuation $\bm v_i$ when the bid profile is $B$ can be written as $u_i(\bm v_i, B; g, p) \coloneqq \bm v_i \cdot g(B)_i - p_i(B)$. 
If the mechanism $(g, p)$ which we are referring to does not raise ambiguity, we will use $u_i(\bm v_i, B)$ for simplicity.

The auction mechanism will be announced publicly at first, so bidders can statically report their valuation to gain a higher utility. 
We consider the following properties from classic auction theory~\cite{myerson1981optimal}.

\begin{definition}[DSIC]\label{def:dsic}
    A mechanism $(g, p)$ satisfies \emph{dominant-strategy incentive compatibility} if for $\forall i$, $B_{-i}$, $\bm v_i$, and $\bm b_i$, we have  
    $
    u_i(\bm v_i,(\bm v_i,B_{-i})) \ge u_i(\bm v_i,(\bm b_i,B_{-i})).
    $
\end{definition}

\begin{definition}[IR]\label{def:ir}
    A mechanism $(g, p)$ satisfies \emph{individual rationality} if for $\forall i$, $(\bm v_i, V_{-i}) \in \supp(\mathcal F)$, we have 
    $
    u_i(\bm v_i, (\bm v_i, V_{-i}))\geq 0.
    $
\end{definition}


Note that the definition is slightly different from the \emph{ex-post IR}, which requires $u_i(\bm v_i,(\bm v_i,B_{-i})) \geq 0$ for all $i$, $\bm v_i$ and $B_{-i}$. 
This is weaker than ex-post IR but stronger than ex-interim IR, as it requires the utility to be non-negative on each point $(\bmv_i, V_{-i})$ that can be realized by $\mathcal F$. 
We adopt this definition to facilitate theoretical analysis; since our mechanism is DSIC, it is reasonable to assume truthful reporting and thus exclude valuation profiles that will never be realized.
In practice, we also propose an intuitive method to extend CA-AMA to satisfy ex-post IR, which is discussed in Section~\ref{sec:optimization} and validated in our experiments.

\begin{remark}[Motivation for IR Relaxation]\label{rem:ir_motivation}
    We clarify that \textbf{addressing the revenue limitations in value-correlated scenarios was our primary motivation}, and the concept of ``IR regret'' emerged \textit{a posteriori} strictly as a technical requirement for practical neural network optimization. 
    Our theoretical formulation (CA-AMA-OPT in Section~\ref{sec:theoretical-caama}) targets strict IR constraints; the regret penalty is merely an operational tool to guide gradient-based training, as we cannot enforce hard IR constraints at every optimization step.
\end{remark}

The optimal auction design is to find the revenue-maximizing DSIC and IR auction mechanism under a certain distribution $\mathcal F$, which can be formulated as the following optimization problem.
\begin{equation}\label{problem:optimal_revenue}\tag{OPT}
    \begin{aligned}
        \max_{g, p}& \quad \rev_{\mathcal F} := \mathbb E_{V \sim \mathcal F} \sum_{i=1}^n p_i(V) \\
        \text{s.t.}& \quad \text{Mechanism } (g, p) \text{ satisfies DSIC and IR.}
    \end{aligned}
\end{equation}

\subsection{Affine Maximizer Auctions}
AMAs are a family of auction mechanisms generalized from the VCG~\cite{VCG,karp1972} auction. 
An AMA can be parameterized by $(\mathcal A, \bmw, \bm\lambda)$. 
$\mathcal A = \{A_1, \cdots, A_S\}$ is a set of $S$ distinct candidate allocations, $w_i > 0$ is the weight for bidder $i$ and $\lambda_k$ is the boost for allocation $A_k$. 
Specifically, each $A_k \in [0,1]^{n \times m}$ satisfies $\sum_{i=1}^n (A_k)_{ij} \leq 1$ for any $j \in [m]$, $\bm w \in \mathbb R^{n}_+$, and $\bm \lambda \in \mathbb R^{S}$.
A deterministic AMA refers to the AMA whose parameter $\mathcal A$ is fixed by all possible deterministic allocations, and so that $S = (n+1)^m$ (each item can be allocated to any of the $n+1$ bidders). 

Formally, with the parameter set as $(\mathcal A, \bm w, \bm\lambda)$, denote $\asw(k; V)$ $:=$ $\sum_{i=1}^n w_i (\bm v_i \cdot (A_k)_i) + \lambda_k$ the affine social welfare for $k$-th allocation under valuation profile $V$ and $\asw_{-i}(k; V) = \asw(k; V) - w_i (\bm v_i \cdot (A_k)_i)$, the allocation and payment rule can be written as 
\begin{equation}\label{ama}\tag{AMA}
    \begin{aligned}
        g^{\ama}&(V) = A_{k^*} : \quad k^* =\arg\max_{k \in [S]} \asw(k; V), \\ 
        p^{\ama}_i&(V) = \frac{1}{w_i} \left(\max_{k \in [S]} \asw_{-i}(k; V)  - \asw_{-i}(k^*; V)\right).
    \end{aligned}
\end{equation} 

As AMA satisfies DSIC and IR regardless of the chosen parameters~\cite{AMA,VVCA2015}, the problem of finding the revenue-maximizing AMA with a fixed size of $\mathcal A$, $|\mathcal A| = S$, can be formulated as an unconstrained optimization. 
\begin{equation}\label{problem:optimal_revenue_ama}\tag{AMA-OPT}
    \max_{\mathcal A: |\mathcal A| = S, \bmw, \bm\lambda} \quad \rev^{\sama}_{\mathcal F} := \mathbb E_{V \sim \mathcal F} \sum_{i=1}^n p^{\ama}_i(V; \mathcal A, \bm w, \bm\lambda).
\end{equation}
Specifically, \emph{we denote $\rev^\dama_{\mathcal F}$ the optimal revenue when fixing $\mathcal A$ to be the set of all deterministic allocations.} To clarify our notation:
\begin{itemize}[topsep=0pt, partopsep=0pt, itemsep=0pt, parsep=0pt]
    \item $\rev^{\sama}_{\mathcal F}$ refers to the optimal revenue for a \textbf{randomized AMA} with menu size pre-set to $S$.
    \item $\rev^{\dama}_{\mathcal F}$ refers to the optimal revenue for a \textbf{deterministic AMA} (where $\mathcal A$ is fixed to all deterministic allocations).
\end{itemize}
Recent work on AMA has the advantage of interpretability and strong performance in theory and empirical~\cite{lavi2003towards} shows that AMA is ``approximately universal'' under certain distributions, and recent AMA-based work~\cite{VVCA2015,LotteryAMA,AMenuNet,ODVVCA} attain considerable empirical performance when combined with machine learning approaches, even compared with those approximate DSIC auctions.

\begin{figure*}
    \centering
    \includegraphics[width=1.0\linewidth]{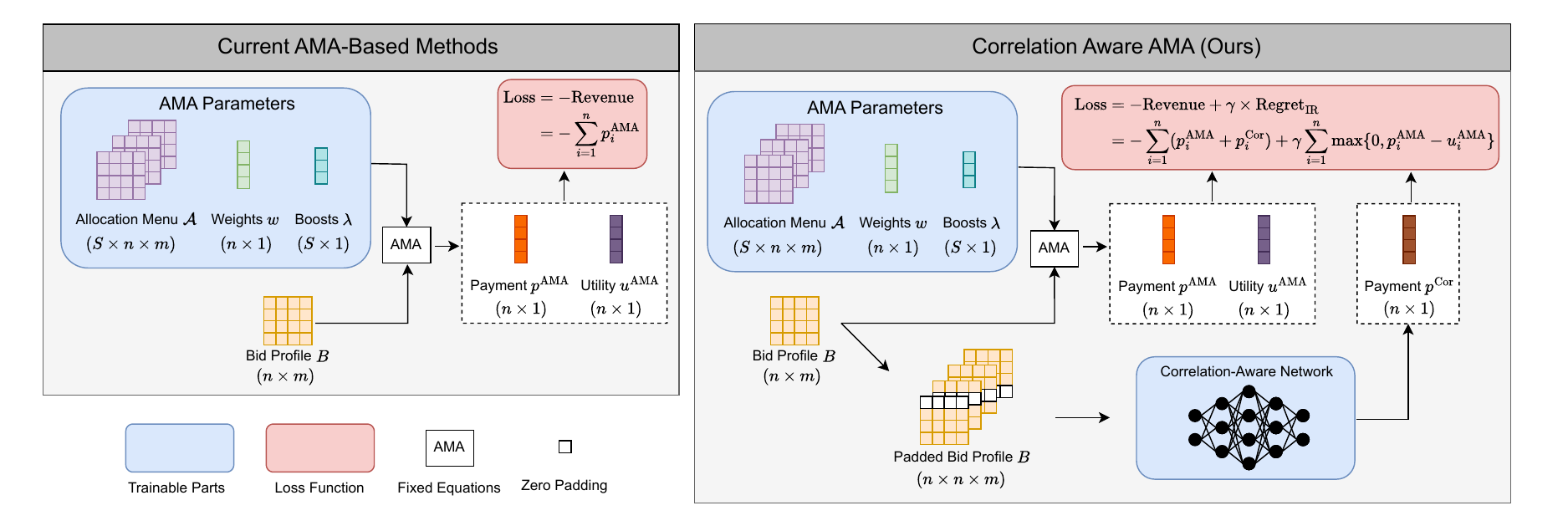}
    \caption{The comparison between the optimization for classic Affine Maximizer Auctions (AMAs) and our proposed Correlation Aware AMA (CA-AMA). 
    In classic AMA-based methods~\cite{VVCA2015,LotteryAMA,AMenuNet,ODVVCA}, we only optimize the AMA parameters to improve the revenue. 
    To enhance AMA's performance under bidder-correlated distributions, we introduce a correlation-aware payment $p^\core$ and hence add a $\regretir$ term in our loss function. }
    \label{fig:framework}
\end{figure*}

\section{Correlation-Aware Affine Maximizer Auctions}\label{sec:theoretical}

This section begins by presenting a bidder-correlated single-item scenario where classic AMAs fail to achieve optimal revenue. 
We then define Correlation-Aware AMA, a modification that introduces a correlation-aware payment term to enhance AMA's expressiveness while preserving the desirable property of DSIC. 
The problem of finding the optimal CA-AMA is subsequently formulated as an optimization problem constrained by IR. 
Finally, we provide a theoretical comparison of the revenue achievable by optimal CA-AMA and classic AMA in single-item auctions.

\subsection{AMA Fails in Certain Distributions}\label{sec:theoretical-amafailure}

We begin by analyzing a potential shortcoming of~\ref{problem:optimal_revenue_ama}.
In current AMA-based methods~\cite{VVCA2015,LotteryAMA,AMenuNet,ODVVCA}, AMA parameters ($\mathcal{A}, \bm{w}, \bm{\lambda}$) are determined during training and remain fixed at test time to ensure DSIC.
Consequently, this static nature prevents the mechanism from further utilizing information from a specific input bidding profile $B$ during evaluation. 
Specifically, if bidders' valuations are linearly correlated, one bidder's valuation $\bm{v}_i$ can be inferred from the valuations of others, $V_{-i}$. 
To illustrate this deficiency, we construct an asymmetric correlated distribution $\mathcal{F}$ where the optimal AMA's revenue can be an arbitrarily small fraction of the optimal revenue.

\begin{restatable}{proposition}{amafailure}\label{thm:ama_failure_opt}
    In single-item auctions, for any number of bidders $n$ and any $\epsilon > 0$, there exists a distribution $\mathcal{F}$ such that $\rev^{\dama}_{\mathcal{F}} \leq \epsilon \cdot \rev_{\mathcal{F}}$. Furthermore, $\rev^{\sama}_{\mathcal{F}} < \rev_{\mathcal{F}}$ for any menu size $S$.
\end{restatable}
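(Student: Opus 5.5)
We construct an asymmetric, perfectly correlated distribution in which the optimal mechanism extracts full surplus, but any AMA payment, being non-decreasing in the other bids, collects almost nothing.

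\textbf{Construction.} Take bidder 1 with valuation $v_1 = 1 - \delta x$. The remaining bidders have $v_i = x$ for $i \ge 2$ (or, more simply, $v_2 = x$ and $v_i \equiv 0$ for $i \ge 3$). Here $x$ is drawn from a distribution on $[0,1]$ chosen so that $x$ is small most of the time.

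\textbf{Optimal mechanism.} Since $v_1$ is a deterministic function of $V_{-1}$, the mechanism can always allocate to bidder 1 at the personalized price $1-\delta x$. This is DSIC because the price does not depend on $b_1$. It is IR in the sense of Definition~\ref{def:ir}, since IR only needs to hold on $\supp(\mathcal F)$. Hence $\rev_{\mathcal F} = \mathbb E[v_1] \ge 1-\delta$, which is full surplus.

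\textbf{Upper bound on deterministic AMAs.} Enumerate the allocations $A_0$ (no sale) and $A_i$ (item to bidder $i$), with boosts $\lambda_0,\lambda_i$. When bidder 1 wins, its payment is
\[
p_1=\tfrac{1}{w_1}\Bigl(\max\{\lambda_0,\max_{i\ge2}(w_iv_i+\lambda_i)\}-\lambda_1\Bigr),
\]
which is non-decreasing in $V_{-1}$. Since $v_1$ is decreasing in $V_{-1}$, the payment is forced to stay below the value at the most unfavourable point, and on most of the mass the mechanism collects far less than $v_1$.

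To make this quantitative, I would choose a distribution where the relevant quantity is large only rarely. Let $v_1 = 1$ with probability $1-\eta$, paired with $v_2=H\cdot(\text{small})$; on the remaining mass, $v_1$ is small while $v_2$ is large. Scaling so that values stay in $[0,1]$, one natural choice is $v_2 = x$ and $v_1 = \epsilon'$-type values arranged inversely.

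The cleanest route is an equal-revenue-style construction. Take $x$ with $\Pr[x \ge t] \approx \epsilon/t$ on $[\epsilon,1]$, and set $v_1 = \epsilon / x$ scaled, with $v_2=x$. The full-surplus benchmark is then $\mathbb E[v_1]$, of order $\epsilon\log(1/\epsilon)$. For any AMA, the revenue collected from bidder 1 is at most $\max_t t\Pr[v_1 \ge t]$, because monotonicity in $v_2$ turns the payment into an effective threshold. This quantity is $O(\epsilon)$. The revenue from bidder 2 is bounded similarly by its own equal-revenue curve. The ratio is then $O(1/\log(1/\epsilon))$, which can be driven below any $\epsilon$. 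The formal step is to show that a payment non-decreasing in $v_2$, combined with IR on the support where $v_1$ decreases in $v_2$, behaves like a posted price. Concretely, the set of winning profiles is an interval in $x$, and the payment on that interval is bounded by the smallest $v_1$ within it.

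\textbf{Randomized AMAs.} With lotteries, the same monotonicity argument applies to expected payments. Full extraction would require allocating with probability 1 to bidder 1 at price $v_1$ for every profile, which is impossible because $p_1$ cannot decrease in $v_2$ while $v_1$ strictly decreases. This gives the strict inequality $\rev^{\sama}_{\mathcal F}<\rev_{\mathcal F}$ for every menu size $S$. The case analysis runs over which menu element wins at two distinct profiles: either some surplus is lost at one of them, or the same allocation wins at both, and then monotonicity of the payment loses surplus at the profile with the larger $v_2$.

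\textbf{Main obstacle.} The hardest step is the reduction from an arbitrary AMA (arbitrary $\bm w$, $\bm\lambda$) to a posted-price-per-bidder bound. I will handle it by noting that, for fixed parameters, the winner changes at most finitely often as $x$ varies. On each interval, bidder $i$'s payment is monotone, and IR pins it below the infimum of $v_i$ on that interval, which bounds the revenue.
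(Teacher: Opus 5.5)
\textbf{There is a genuine gap: the distribution you write down does not make AMAs fail.}

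Your mechanism-side argument matches the paper's. Along a one-dimensional, negatively correlated support, the set where a bidder wins is an up-set in its own value. Its AMA payment is non-decreasing in the others' values, hence non-increasing along the support. So IR at the threshold $v^*$ caps the payment by $v^*$, and that bidder contributes at most $\max_t t\Pr[v\ge t]$.

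The problem is the construction. If $\Pr[x\ge t]=\epsilon/t$ on $[\epsilon,1]$ and $v_1=\epsilon/x$, then $\Pr[v_1\ge s]=\Pr[x\le \epsilon/s]=1-s$ for $s\in(\epsilon,1]$. So $v_1$ is essentially uniform, not equal-revenue: $\mathbb E[v_1]\approx\tfrac12$ and $\max_t t\Pr[v_1\ge t]\approx\tfrac14$, contradicting both orders you claim. Concretely, take the deterministic AMA with all $w_j=1$, $\lambda_1=0$, $\lambda_0=\tfrac12$, and $\lambda_j=-2$ for $j\ge 2$. It is a posted price $\tfrac12$ to bidder $1$ and earns about $\tfrac14$. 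Meanwhile $\rev_{\mathcal F}\le\mathbb E[\max_i v_i]\le\mathbb E[v_1]+\mathbb E[x]\approx\tfrac12+\epsilon(1+\ln(1/\epsilon))$, so the ratio stays near $\tfrac12$ however small $\epsilon$ is. Your two per-bidder bounds treat both $v_1$ and $v_2$ as equal-revenue, but the reciprocal of an equal-revenue variable is uniform, so they cannot both hold. Your first construction $v_1=1-\delta x$ also fails: posting the price $1-\delta$ to bidder $1$ already earns $1-\delta$ against a benchmark of at most $1$.

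The missing idea is to make the correlated partners negligible. All the surplus should sit on one equal-revenue bidder whose value is still pinned down by the others. The paper takes $v_1$ with density $\frac{\epsilon}{(1-\epsilon)v^2}$ on $[\epsilon,1]$ and $v_i=\epsilon_1(1-v_1)$ for $i\ge 2$, with $\epsilon_1<\epsilon$. Then $\rev_{\mathcal F}=\mathbb E[v_1]=\frac{\epsilon\ln(1/\epsilon)}{1-\epsilon}$. Any deterministic AMA collects at most $\frac{\epsilon}{1-\epsilon}$ from bidder $1$ (your threshold bound) and at most $\epsilon_1$ from everyone else, so the ratio is $O(1/\ln(1/\epsilon))$. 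Your reciprocal form is repaired the same way by shrinking the partner: take $v_2=x$ equal-revenue and $v_1=\epsilon^2/x\le\epsilon$. The benchmark becomes $\mathbb E[v_2]=\epsilon(1+\ln(1/\epsilon))$, and AMA revenue is at most $2\epsilon$.

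Your randomized argument is sound once ``every profile'' is read as ``almost every profile.'' Equality forces full extraction almost everywhere. Two such profiles with the same top bidder must use the unique menu element that gives that bidder the item with probability one, and that element's payment moves the wrong way along the support. This differs from the paper, which bounds the number of switches of the winning element by $S$ and exhibits a loss on the first interval; your route needs no switching lemma.
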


We refer the reader to Appendix~\ref{sec:appendix_proofs_caama} for the constructed distribution and the complete proof.


\subsection{Correlation-Aware Payment}\label{sec:theoretical-caama}

Motivated by this failure case of classic AMAs, we propose a modification to address correlated valuation distributions. 
Specifically, we introduce an additional payment term for each bidder $i$, $p_i^{\core}(V_{-i})$, which depends solely on the valuations of other bidders, $V_{-i}$. 
Formally, the CA-AMA mechanism is defined as:
\begin{equation*}\label{eq:caama_definition}
    \begin{aligned}
        g^{\ca}(V; \mathcal{A}, \bm{w}, \bm{\lambda}) &= g^{\ama}(V; \mathcal{A}, \bm{w}, \bm{\lambda}),\\
        p^{\ca}_i(V; \mathcal{A}, \bm{w}, \bm{\lambda}, p^\core) &= p^{\ama}_i(V; \mathcal{A}, \bm{w}, \bm{\lambda}) + p^{\core}_i(V_{-i}).
    \end{aligned}
\end{equation*}
For any bidder $i$, since $p_i^{\core}(V_{-i})$ depends only on other bidders' valuations, it acts as a constant from bidder $i$'s perspective when determining their optimal bid. Thus, the optimal bidding strategy remains unchanged from that in a classic AMA. Therefore, CA-AMA inherits the DSIC property from AMA.

\begin{restatable}{proposition}{caamadsic}\label{thm:ca_ama_dsic}
    For any $\mathcal{A}$, $\bm{w}$, $\bm{\lambda}$ and correlation-aware function $p^{\core}$, the CA-AMA mechanism $(g^{\ca}, p^{\ca})$ satisfies DSIC.
\end{restatable}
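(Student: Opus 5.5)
The plan is to reduce DSIC of CA-AMA directly to DSIC of the underlying AMA, which the paper takes as known for every choice of parameters $(\mathcal A,\bmw,\bm\lambda)$. Fix a bidder $i$, the other bids $B_{-i}$, a true valuation $\bmv_i$, and any misreport $\bm b_i$. Since $g^{\ca}=g^{\ama}$ and $p^{\ca}_i = p^{\ama}_i + p^{\core}_i$, the CA-AMA utility decomposes as
\begin{equation*}
u_i(\bmv_i,(\bm b_i,B_{-i});g^{\ca},p^{\ca}) = u_i(\bmv_i,(\bm b_i,B_{-i});g^{\ama},p^{\ama}) - p^{\core}_i(B_{-i}).
\end{equation*}
Here $p^{\core}_i$ is evaluated at the reported profile of the others, $B_{-i}$, because the mechanism only sees bids.

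The key observation is that the subtracted term $p^{\core}_i(B_{-i})$ does not depend on $\bm b_i$. So comparing truthful reporting $\bm b_i=\bmv_i$ with any misreport, this term is identical on both sides and cancels:
\begin{equation*}
u_i(\bmv_i,(\bmv_i,B_{-i});g^{\ca},p^{\ca}) - u_i(\bmv_i,(\bm b_i,B_{-i});g^{\ca},p^{\ca}) = u_i(\bmv_i,(\bmv_i,B_{-i});g^{\ama},p^{\ama}) - u_i(\bmv_i,(\bm b_i,B_{-i});g^{\ama},p^{\ama}).
\end{equation*}
The right-hand side is nonnegative by DSIC of AMA, which is what Definition~\ref{def:dsic} requires.

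For completeness, if one does not want to cite AMA's DSIC, I would reprove it with the standard VCG argument. Multiplying the AMA utility by $w_i>0$ gives
\begin{equation*}
w_i u_i = \asw(k^*(B);(\bmv_i,B_{-i})) - \max_k \asw_{-i}(k;B).
\end{equation*}
The subtracted maximum is independent of $\bm b_i$. The first term is the true affine welfare of the chosen allocation, and it is maximized when $k^*$ is chosen using the true $\bmv_i$, since $k^*$ is the argmax of exactly that quantity.

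The only subtle point, which is hardly an obstacle, is making sure $p^{\core}_i$ takes other bidders' \emph{bids} as input and not their true valuations. It also requires no assumption on $p^{\core}$ (sign, continuity, or measurability), which is why the statement holds for arbitrary $p^{\core}$. I would also note that nonnegativity of payments and IR are not needed here; they are handled separately by the IR constraint in CA-AMA-OPT.
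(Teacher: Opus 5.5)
Your proposal is correct and is essentially the paper's proof: the CA-AMA utility equals the AMA utility minus $p^{\core}_i$ evaluated at the others' reports, this term does not depend on $\bm b_i$, and so the comparison reduces to DSIC of AMA. Your optional VCG-style check of AMA's DSIC, which the paper only cites, is also correct.
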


However, IR can be violated if $p_i^{\core}(V_{-i})$ is set inappropriately high. Considering this, we formulate the problem of finding the optimal CA-AMA as an IR-constrained optimization problem:
\begin{equation}\label{problem:optimal_revenue_caama}
    \tag{CA-AMA-OPT}
    \begin{aligned}
        &\max_{\mathcal A: |\mathcal A| = S, \bm{w}, \bm{\lambda}, p^{\core}} \rev^{\sca}_{\mathcal{F}} := \mathbb{E}_{V \sim \mathcal{F}} \sum_{i=1}^n p^{\ca}_i(V)\\
        & \quad \quad \text{s.t.} \quad \text{The mechanism } (g^{\ca}, p^{\ca}) \text{ satisfies IR.}
    \end{aligned}
\end{equation}
Note that we omit the parameters of CA-AMA for the above $p^{\ca}(\cdot)$ function. 
Similar to the notation for AMA, we denote $\rev^{\dca}_{\mathcal{F}}$ to be the optimal revenue obtained by CA-AMA with $\mathcal A$ fixed to be the set of all deterministic allocations. 
To highlight the importance of this formulation for correlated distributions, we analyze the relationship between the optimal revenues from AMA and CA-AMA.
Our analysis primarily focuses on single-item auctions; we will also discuss the challenges in extending these theoretical guarantees to multi-item settings. The empirical performance of CA-AMA in multi-item auctions is demonstrated in Section~\ref{sec:experiment}.

Clearly, for any distribution $\mathcal{F}$, $\rev^{\ca}_{\mathcal{F}} \geq \rev^{\ama}_{\mathcal{F}}$, since setting $p^{\core}_i(V_{-i}) = 0$ for all $i$ allows CA-AMA to replicate any classic AMA. We then present cases where this relationship can be further characterized.

\begin{restatable}{theorem}{caamasuccess}\label{thm:ca_ama_correlated}
    In single-item auctions, for any number of bidders $n$: 
    \begin{itemize}[topsep=0pt, partopsep=0pt, itemsep=0pt, parsep=0pt]
        \item If $\mathcal{F}$ is bidder-independent, then $\rev^{\dca}_{\mathcal{F}} = \rev^{\dama}_{\mathcal{F}}$.
        \item For any $\epsilon > 0$, there exists a distribution $\mathcal{F}$ such that $\rev^{\dama}_{\mathcal{F}}$ $\leq$ $\epsilon \cdot \rev_{\mathcal{F}}$, while $ \rev^{\dca}_{\mathcal{F}} = \rev_{\mathcal{F}}$. Furthermore, $\rev^{\sama}_{\mathcal{F}} < \rev_{\mathcal{F}}$ for any menu size $S$.
    \end{itemize}
\end{restatable}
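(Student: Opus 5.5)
The plan is to treat the two bullets separately, reusing Proposition~\ref{thm:ama_failure_opt} for the negative part of the second bullet.

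\textbf{First bullet (independent $\mathcal F$).} The inequality $\rev^{\dca}_{\mathcal F}\ge\rev^{\dama}_{\mathcal F}$ is already noted in the text (take $p^\core\equiv 0$), so only the reverse inequality needs proof.

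Fix any deterministic CA-AMA $(\bm w,\bm\lambda,p^\core)$ that is feasible for \ref{problem:optimal_revenue_caama}. Under independence, $\supp(\mathcal F)=\prod_i \supp(\mathcal F_i)$. Hence for each $V_{-i}$ in the support, IR must hold at every $\bm v_i\in\supp(\mathcal F_i)$.

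Now fix bidder $i$ and $V_{-i}$. Then
\[
u_i^{\ca}(\bm v_i)=u_i^{\ama}(\bm v_i)-p^\core_i(V_{-i}).
\]
In a single-item deterministic AMA, $u_i^{\ama}$ is nondecreasing in $v_i$. Write $\underline v_i=\inf\supp(\mathcal F_i)$, so IR at $\underline v_i$ gives $p^\core_i(V_{-i})\le u_i^{\ama}(\underline v_i,V_{-i})$.

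I would then absorb this slack into the AMA itself by shifting the boosts. The natural device is to transform valuations to $v_i-\underline v_i$ via $\lambda$: set $\lambda'_i=\lambda_i+w_i\underline v_i$ for allocation ``$i$ wins'', keeping the weights. This raises bidder $i$'s threshold by exactly the payment needed for zero utility at $\underline v_i$.

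The hard part is that $u_i^{\ama}(\underline v_i,V_{-i})$ depends on $V_{-i}$, while a single boost shift cannot. So instead of replicating the CA-AMA exactly, I would compare optimal revenues directly. By Myerson-style characterization, for independent priors and any DSIC deterministic mechanism with this fixed allocation,
\[
\text{revenue}=\mathbb E\Bigl[\textstyle\sum_i \text{threshold payment}\Bigr]-\mathbb E\Bigl[\textstyle\sum_i u_i(\underline v_i,V_{-i})\Bigr].
\]
Adding $p^\core$ only reduces the second term, and IR requires $u_i(\underline v_i,\cdot)\ge 0$. So the best CA-AMA with a given allocation rule earns at most the threshold revenue with zero utility at $\underline v_i$.

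It then remains to show that some deterministic AMA attains at least this. I would argue this by reparametrizing the boosts so that the lowest type never obtains positive utility. The expected worry is that changing $\lambda$ alters the allocation. If this exact-equality route fails, I would fall back to showing that the optimal deterministic AMA for independent priors can be taken with boosts such that $u_i^{\ama}(\underline v_i,\cdot)=0$ almost surely. For the uniform case with $\underline v_i=0$ this is automatic, since the VCG-type utility at $v_i=0$ is $0$. This step is the main obstacle.

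\textbf{Second bullet.} Take the distribution from Proposition~\ref{thm:ama_failure_opt}, which already gives $\rev^{\dama}\le\epsilon\,\rev$ and $\rev^{\sama}<\rev$. The intended construction has a deterministic linear relation: the value of one designated bidder is a known function $v_1=f(V_{-1})$ that makes bidder 1 the efficient full-surplus winner, with the others' values tiny.

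Choose the boosts so that the allocation always goes to bidder 1 (a large $\lambda$ for ``1 wins''), giving AMA payment $0$. Then set
\[
p^\core_1(V_{-1})=f(V_{-1}),\qquad p^\core_j\equiv 0 \ \ (j\ne 1).
\]
Bidder 1's utility on the support is exactly $0$, so IR holds, and the revenue is $\mathbb E[v_1]$. If the construction in Proposition~\ref{thm:ama_failure_opt} makes the optimal full-surplus mechanism allocate to different bidders on different regions, I would use the $\lambda$'s and $\bm w$ to reproduce that allocation. I would then set each winner's $p^\core$ to its inferred value minus its AMA payment, and each loser's $p^\core$ to $0$.

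Since full surplus $\mathbb E[\max_i v_i]$ upper bounds $\rev_{\mathcal F}$, this yields $\rev^{\dca}_{\mathcal F}=\rev_{\mathcal F}$. The only care needed is that $p^\core_i$ may depend on $V_{-i}$, which is exactly what the correlation permits.
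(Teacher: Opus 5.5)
Your proposal has a genuine gap in each bullet.

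\textbf{First bullet.} You correctly reduce to proving $\rev^{\dama}_{\mathcal F}\ge\rev^{\dca}_{\mathcal F}$, and you correctly note that $p^{\core}_i(V_{-i})\le u_i^{\ama}(\underline v_i,V_{-i})$. However, the decisive step is to exhibit a deterministic AMA earning at least as much, and that is exactly what you leave open as ``the main obstacle.'' The shift you suggest would not work. The choice $\lambda_i'=\lambda_i+w_i\underline v_i$ lowers bidder $i$'s threshold by $\underline v_i$ rather than raising it. Being bidder-specific, it also changes who wins.

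The missing observation is that a losing bidder has zero AMA utility. So if $p^{\core}_{i^*}(V_{-i^*})>0$, IR at $(\underline v_{i^*},V_{-i^*})$ forces $i^*$ to win there; this point lies in the support by independence. Hence $w_{i^*}\underline v_{i^*}+\lambda_{i^*}>\lambda_0$, and $w_{i^*}\underline v_{i^*}+\lambda_{i^*}\ge w_j\underline v_j+\lambda_j$ for all $j$. Two consequences follow: the item is never reserved on the support, and no other bidder can have $p^{\core}_j>0$.

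Now subtract the \emph{same} constant $b=w_{i^*}\underline v_{i^*}+\lambda_{i^*}-\lambda_0>0$ from every bidder's boost, keeping $\lambda_0$. This has three effects:
\begin{enumerate}
\item The allocation on the support is unchanged.
\item Every winner's AMA payment weakly rises, since $\max\{\lambda_0,x-b\}+b\ge\max\{\lambda_0,x\}$.
\item The new payment of $i^*$ is at least $\underline v_{i^*}$. By IR at $\underline v_{i^*}$, this bounds $p^{\ama}_{i^*}+p^{\core}_{i^*}$ wherever $p^{\core}_{i^*}>0$; elsewhere the old AMA payment already dominates.
\end{enumerate}
In your Myerson language, the shift keeps the allocation and replaces $i^*$'s threshold $\tau_{i^*}(V_{-i^*})$ by $\max\{\tau_{i^*}(V_{-i^*}),\underline v_{i^*}\}$. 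That is exactly the zero-lowest-type-utility payment you were after. So your worry that $u_i^{\ama}(\underline v_i,V_{-i})$ varies with $V_{-i}$ disappears: you never replicate it, you only beat the uniform cap $\underline v_{i^*}$. Your Myerson decomposition is a valid upper bound, but it does not by itself produce the AMA. Your fallback for $\underline v_i=0$ covers only a special case.

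\textbf{Second bullet.} Your CA-AMA construction is correct: a large boost for ``1 wins'' makes all AMA payments zero, and $p^{\core}_1(V_{-1})=v_1$ extracts the full surplus. It is slightly cleaner than the paper's $\bm w=\bm 1$, $\bm\lambda=\bm 0$, where the AMA part already charges $v_2$.

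The problem is how you obtain $\rev^{\dama}_{\mathcal F}\le\epsilon\,\rev_{\mathcal F}$ and $\rev^{\sama}_{\mathcal F}<\rev_{\mathcal F}$. You cite Proposition~\ref{thm:ama_failure_opt}, but in the paper that proposition is proved by this very theorem, so the argument is circular. Even as a black box, the proposition only asserts existence of \emph{some} $\mathcal F$. You cannot guess its form and then verify full-surplus extraction on it. You must fix the distribution yourself and prove the negative claims for it.

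The paper takes $v_1$ equal-revenue on $[\epsilon,1]$ (density $\epsilon/((1-\epsilon)v^2)$) and $v_j=\epsilon_1(1-v_1)$ for $j\ge 2$, with $\epsilon_1<\epsilon$. This gives $\rev_{\mathcal F}=\epsilon\ln(1/\epsilon)/(1-\epsilon)$.
\begin{itemize}
\item \emph{Deterministic AMAs.} The others' values fall as $v_1$ rises, so bidder 1's winning set is an upper interval $[v^*,1]$. On it, bidder 1's AMA payment is nonincreasing in $v_1$, hence at most $v^*$ by IR. This yields $\rev^{\dama}_{\mathcal F}\le\epsilon/(1-\epsilon)+\epsilon_1$, a vanishing fraction of $\rev_{\mathcal F}$ as $\epsilon\to 0$ with $\epsilon_1$ small.
\item \emph{Randomized AMAs with $|\mathcal A|=S$.} Along the one-dimensional support, a selection pattern $k,k',k$ forces $A_k=A_{k'}$, so there are at most $S$ switches. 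On the first interval, either bidder 1 gets the item with probability below $1$, or its payment is capped by the left endpoint $\epsilon$. Either way revenue is strictly lost.
\end{itemize}
None of this appears in your proposal, and it is the substantive content of the second bullet.
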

This result indicates that introducing the $p_i^{\core}(V_{-i})$ term offers no benefit over classic AMAs in bidder-independent single-item auctions when considering deterministic mechanisms. The second part of the theorem utilizes the same constructed distribution as in Proposition~\ref{thm:ama_failure_opt}. Under such correlated distributions, CA-AMA demonstrates significantly greater expressiveness than classic AMAs, achieving optimal revenue where AMAs fail.

\paragraph{Extension to Multi-Item Auctions.}
While Theorem~\ref{thm:ca_ama_correlated} pertains to single-item auctions, we believe similar results hold for multi-item auctions, particularly under \textbf{negative correlation} between bidders' valuations. The intuition follows from our earlier observation: in classic AMA, a bidder's payment can only be a non-decreasing function of other bidders' valuations. Since we expect a bidder's payment to increase with its own valuation, classic AMA cannot express this property under negative correlation, leading to revenue loss. The distributions used in our experiments (Dirichlet Value Share and Linear Mixture) are not merely corner cases but have representative significance for negatively correlated settings.

Theoretically, the constructive proof of the second part of Theorem~\ref{thm:ca_ama_correlated} directly extends to multi-item settings: if bidders have \emph{identical} valuations across all items following our constructed single-item distribution, deterministic AMA performs arbitrarily poorly and randomized AMA remains suboptimal, whereas CA-AMA achieves optimality. For independent valuations across items, proving ``CA-AMA = AMA'' remains a conjectured future direction. As we have spent considerable time attempting this proof, we discuss several theoretical challenges around Line 190: the main difficulty is that characterizing the optimal parameters even for classic AMA is hard, making the comparison truly challenging. Therefore, we primarily validate the performance of CA-AMA in multi-item settings empirically in Section~\ref{sec:experiment}.

So far, we have introduced the CA-AMA framework, formulated its optimization problem, and theoretically analyzed its potential for revenue improvement over classic AMAs. The subsequent section will propose a data-driven algorithm for optimizing CA-AMA.

\section{Optimization of CA-AMA}\label{sec:optimization}

This section details the procedure for optimizing the CA-AMA. 
Firstly, we design a loss function for Problem~\ref{problem:optimal_revenue_caama} within a data-driven framework. 
We then propose a two-stage training algorithm to optimize both the AMA parameters and the correlation-aware payments $p^{\core}$, which is summarized in Algorithm~\ref{alg:unified_ca_ama}. 
We also introduce a post-processing method to let CA-AMA achieve strict ex-post IR.
Finally, we provide theoretical support for our method by establishing the continuity of the optimal $p^{\core}$ under mild assumptions and proving that the generalization error of the IR violation is bounded.

\subsection{Loss Function and Training}\label{sec:optimization-loss-training}

To address Problem~\ref{problem:optimal_revenue_caama}, we parameterize the AMA components $(\mathcal{A}, \bm{w}, \bm{\lambda})$ and the correlation-aware payments $p^{\core}$ using neural networks with parameters $\theta$ and $\phi$, respectively. 
Optimization is performed via the minimization of a loss function that incorporates a trade-off between revenue maximization and IR violations.

We define the $\regretir$ for a single data point $V$ as 
\begin{equation*}
    \regretir(V) := \sum_{i=1}^n \max\{0, - u_i^{\ca}(v_i, V)\}, 
\end{equation*}
where $u_i^{\ca}(V) = u_i^{\ama}(V) - p_i^{\core}(V_{-i})$. Given a dataset $D = \{V^{(k)}\}_{k=1}^K$, the empirical loss for the dataset is:
\begin{equation}\label{loss_function}
\begin{aligned}
    \mathcal{L}(\theta, \phi) 
    &:= \sum_{k=1}^{K} \left( -\text{Revenue}(V^{(k)}) + \gamma \cdot \regretir(V^{(k)}) \right).
\end{aligned}
\end{equation}

The revenue term comprises payments from both the AMA and the core price components, defined as:
$\text{Revenue}(V) = \sum_{i=1}^n (p_i^{\ama}(V) + p_i^{\core}(V_{-i}))$.
$\gamma$ is the hyperparameter that adjusts the strength of the IR penalty. 
Following \citet{ivanov2022optimal}, $\gamma$ is updated iteratively based on a target regret $R_{\text{target}}$:
\begin{equation}\label{eq:gamma_update}
    \gamma_{t+1} = \text{clip} \left(\gamma_t + \gamma_{\Delta} \left(\log R(D) - \log R_{\text{target}} \right), 1, \bar{\gamma} \right).
\end{equation}

\paragraph{Comparison with Other Regret Terms.}
Introducing a regret term into the loss function is a common technique for handling constraints in auction optimization~\cite{dutting2023optimal,GemNet}, as it transforms a hard constraint into a soft penalty that is more amenable to gradient-based methods. We emphasize that IR-regret and IC-regret serve fundamentally different purposes: IC-regret enforces incentive compatibility constraints, whereas IR-regret only controls individual rationality violations. Since CA-AMA guarantees DSIC by construction, it does not need to compute IC-regret during training, which yields significant \emph{computational efficiency} advantages.

Specifically, computing IC-regret as used in RegretNet~\cite{dutting2023optimal} necessitates running many additional auctions to find the approximate best bid value. Similarly, GemNet~\cite{GemNet} involves complex integer programming to ensure strict feasibility. In contrast, IR-regret can be estimated alongside the revenue in a single forward pass of the mechanism, making it computationally efficient. Furthermore, IR-regret is more flexible to be adjusted: we can easily reduce $p_i^{\core}$ by a constant to lower the IR-regret after training, or even enforce strict IR using the post-processing transformation proposed in the following. Adjusting the degree of IC-regret or feasibility regret post-training is non-trivial and typically requires retraining.

\paragraph{Two-Stage Training.}
To attain a mechanism with high revenue and low $\regretir$, we propose a two-stage optimization procedure: mutual training followed by post-training. 

In the \textbf{mutual training} stage, the parameters $\theta$ (for AMA components) and $\phi$ (for $p^{\core}$) are jointly trained. 
As the revenue is non-differentiable to the AMA parameters, we follow the previous methods~\cite{LotteryAMA,AMenuNet} to replace the $\text{argmax}$ in the allocation rule of \ref{ama} with a $\text{softmax}$ approximation. 
The primary objective of mutual training is to find AMA parameters that are close to optimal for the combined objective. 
However, as the true AMA utility and the actual regret of IR violation are just estimated by $\text{softmax}$, they may not precisely meet the target $R_{\text{target}}$ after this stage. 

Therefore, a subsequent \textbf{post-training} stage is introduced to further refine $p^{\core}$.
In this stage, the AMA parameters are frozen. 
Since gradients of $\theta$ are not required, the exact AMA payments $p_i^{\ama}$ and utilities $u_i^{\ama}$ are used in the loss calculation for this stage. 
The rationale for fixing $\theta$ is that mutual training is assumed to have found a near-optimal configuration for the core AMA structure; post-training then performs a more precise adjustment of $p_i^{\core}$. 

Furthermore, for applications requiring \textbf{strict ex-post IR}, our trained CA-AMA can be easily adapted. We introduce a simple \textbf{post-processing step}: after the mechanism determines outcomes, any bidder facing a negative utility can choose to opt out, receiving zero allocation and making a zero payment. This ensures all participating bidders have non-negative utility, thus making the mechanism strictly IR. Crucially, this transformation preserves the DSIC property. A bidder's utility becomes $\max\{u_i, 0\}$. Since $\max\{\cdot, 0\}$ is a non-decreasing function, any bid that maximizes the original utility $u_i$ also maximizes the post-transformation utility. Therefore, truthful bidding remains a dominant strategy.

\begin{table*}[t]
    \centering
    \caption{Revenue performance of CA-AMA and baseline methods under various bidder valuation distributions. CA-AMA consistently outperforms other methods in most scenarios. All results are averaged over $5$ different random seeds. The number in parentheses indicates the average $\regretir$ of CA-AMA under the IR constraint. The column ``Neg. Utility'' shows the fraction of samples with negative utility before post-processing.}
    \adjustbox{max width=0.9\textwidth}{
    \begin{tabular}{c|ccccc|c}
        \toprule
        {Setting} & {Item-CAN} & {VCG} & {Randomized AMA} & {CA-AMA ($\regretir$)} & {CA-AMA (Ex-post IR)} & {Neg. Utility} \\
        \midrule
        \rowcolor{gray!20}
        \multicolumn{7}{c}{\textit{Dirichlet Value Share ($\alpha=0.5$)}} \\
        \addlinespace
        $2 \times 2$ & 0.7874 & 0.2702 & 0.7480 & 0.8532 (0.0011) & \textbf{0.8261} & 0.0200 \\
        $2 \times 5$ & 1.9685 &0.6774 &1.8808 &2.3663 (0.0048) &\textbf{2.2694} & 0.0393 \\
        $3 \times 10$ & 3.3121& 1.7572& 3.1363& 3.6205 (0.0031)& \textbf{3.5623} & 0.0120 \\
        $5 \times 5$ & 1.2674& 0.9413& 1.2123& 1.3209 (0.0009)& \textbf{1.3084} & 0.0047 \\
        \midrule
        \rowcolor{gray!20}
        \multicolumn{7}{c}{\textit{Dirichlet Value Share ($\alpha=2.0$)}} \\
        \addlinespace
        $2 \times 2$ & 0.6690& 0.4676& 0.6516& 0.7131 (0.0006)& \textbf{0.6995} & 0.0129\\
        $2 \times 5$ & 1.6725& 1.1723& 1.7362& 1.9437 (0.0028)& \textbf{1.8934} & 0.0239 \\
        $3 \times 10$ & 2.6461& 2.2796& 2.8706& 2.9448 (0.0009)& \textbf{2.9292} & 0.0046\\
        $5 \times 5$ & 0.9820& 0.9384& 1.0239& 1.0406 (0.0005)& \textbf{1.0330} & 0.0081\\
        \midrule
        \rowcolor{gray!20}
        \multicolumn{7}{c}{\textit{Linear Mixture ($\alpha=0.6$)}} \\
        \addlinespace
        $2 \times 5$ (Sym) & 2.2955& 1.6963& 2.2923& 2.6305 (0.0036)& \textbf{2.5628} & 0.0182\\
        $2 \times 5$ (Asym) & 1.5715& 0.6011& 1.7135& 1.9359 (0.0052)& \textbf{1.8553} & 0.0267\\
        \midrule
        \rowcolor{gray!20}
        \multicolumn{7}{c}{\textit{Linear Mixture ($\alpha=0.8$)}} \\
        \addlinespace
        $2 \times 5$ (Sym) & 2.7655& 1.4914& 2.4239& 3.0837 (0.0034)& \textbf{2.9643} & 0.0156\\
        $2 \times 5$ (Asym) & 1.8390& 0.5767& 1.6677& 2.2028 (0.0028)& \textbf{2.1124} & 0.0132\\
        \bottomrule
        \end{tabular}
    }
    \label{tab:revenue}
\end{table*}

\subsection{Theoretical Characterizations}\label{sec:optimization-theory}
To conclude this section, we present theoretical results that support the validity and tractability of our optimization approach. Our theoretical analysis focuses on the novel aspects compared to classic AMA: the correlation-aware term $p^{\core}$ and the $\regretir$ component of the loss.

\textbf{Continuity of Optimal $p^{\core}$.}
For bidder $i$'s correlation-aware payment $p_i^{\core}$, to maximize revenue subject to IR ($u_i^{\ca} \ge 0$, which implies $u_i^{\ama}(V) - p_i^{\core}(V_{-i}) \ge 0$), the largest such $p_i^{\core}(V_{-i})$ is given by: 
$$
p_i^{\text{OPT-core}}(V_{-i}) := \inf_{\bm v_i \in \supp(\mathcal{F}_i(V_{-i}))} u^{\ama}_i((\bm v_i, V_{-i}); \mathcal{A}, \bm{w}, \bm{\lambda}).
$$
This means that to maximize revenue subject to IR, $p_i^{\core}(V_{-i})$ should ideally be set to the minimum utility bidder $i$ would receive from the AMA mechanism. 
Intuitively, if $p_i^{\core}(V_{-i})$ exceeds this value, IR is violated; if it is less, the revenue is sub-optimal.
We then establish continuity properties for this $p_i^{\text{OPT-core}}$.

\begin{restatable}{theorem}{continuity}\label{thm:continuity}
    The target function $p_i^{\text{OPT-core}}$ is continuous with respect to the AMA parameters $\mathcal{A}$, $\bm{w}$, and $\bm{\lambda}$. Furthermore, assume that there exists a constant $C_H > 0$ such that for all $V_{-i}, V'_{-i}$, the Hausdorff distance $h(\text{supp}(\mathcal{F}_i(V_{-i})), \text{supp}(\mathcal{F}_i(V'_{-i}))) \leq C_H \|V_{-i} - V'_{-i} \|$, then $p_i^{\text{OPT-core}}$ is also continuous with respect to $V_{-i}$.
\end{restatable}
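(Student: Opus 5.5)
Write $K(V_{-i}) := \supp(\mathcal F_i(V_{-i})) \subseteq [0,1]^m$ and $\Theta := (\mathcal A, \bm w, \bm\lambda)$. The plan rests on one uniform Lipschitz estimate for the AMA utility $u_i^{\ama}(V;\Theta)$, jointly in $\bm v_i$ and $\Theta$ on compact sets. After that, both claims follow from standard facts about infima of equicontinuous families. I will use the closed form
\[
u_i^{\ama}(V;\Theta) = \bm v_i\cdot (A_{k^*})_i - p_i^{\ama}(V) = \frac{1}{w_i}\Big(\max_{k\in[S]} \asw(k;V) - \max_{k\in[S]} \asw_{-i}(k;V)\Big),
\]
which holds because $\asw(k^*;V) = \asw_{-i}(k^*;V) + w_i\,\bm v_i\cdot(A_{k^*})_i$. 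This expression involves no argmax and no tie-breaking; it is a difference of two maxima of finitely many functions.

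\textbf{Step 1 (regularity of $u_i^{\ama}$).} Each $\asw(k;V)$ is a polynomial in $(\bm w, A_k, \lambda_k, V)$. On the region $\bm v_j\in[0,1]^m$, $(A_k)\in[0,1]^{n\times m}$, $w_j\in[\underline w,\overline w]$ with $\underline w>0$, and $\bm\lambda$ bounded, each $\asw(k;V)$ is Lipschitz with a uniform constant. A max of finitely many Lipschitz functions is Lipschitz with the same constant. Dividing by $w_i\ge \underline w>0$ keeps this property. Hence there is $L$ such that
\[
|u_i^{\ama}(V;\Theta) - u_i^{\ama}(V';\Theta')| \le L(\|V-V'\| + \|\Theta-\Theta'\|)
\]
on this compact set. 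Concretely, $\bm v_i$ enters with coefficient $(A_k)_i\in[0,1]^m$, so $u_i^{\ama}$ is $1$-Lipschitz in $\bm v_i$ in the $\ell_\infty$/$\ell_1$ sense. The only care needed is to restrict to a compact neighborhood of a given $\Theta$ with $w_i$ bounded away from zero. Continuity is local, so this is enough.

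\textbf{Step 2 (continuity in $\Theta$).} Fix $V_{-i}$ and take $\Theta'$ near $\Theta$. For every $\bm v_i\in K(V_{-i})$, Step 1 gives $u_i^{\ama}((\bm v_i,V_{-i});\Theta') \ge u_i^{\ama}((\bm v_i,V_{-i});\Theta) - L\|\Theta-\Theta'\|$. Taking the infimum over $\bm v_i$, and then swapping the roles of $\Theta$ and $\Theta'$, yields
\[
|p_i^{\text{OPT-core}}(V_{-i};\Theta)-p_i^{\text{OPT-core}}(V_{-i};\Theta')|\le L\|\Theta-\Theta'\|.
\]
This uses that the infimum of a uniformly (in $\bm v_i$) Lipschitz family is Lipschitz. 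The infimum is finite because $K(V_{-i})$ is nonempty and bounded, and $u_i^{\ama}$ is bounded there.

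\textbf{Step 3 (continuity in $V_{-i}$).} This is the main step, because both the integrand and the domain of the infimum move. Fix $V_{-i}, V'_{-i}$ and set $\delta = C_H\|V_{-i}-V'_{-i}\|$, so the Hausdorff distance between $K(V_{-i})$ and $K(V'_{-i})$ is at most $\delta$. For any $\bm v_i'\in K(V'_{-i})$, choose $\bm v_i\in K(V_{-i})$ with $\|\bm v_i-\bm v_i'\|\le\delta+\eta$ for an arbitrary $\eta>0$. Such a point exists by the Hausdorff bound even without closedness, because of the slack $\eta$. Step 1 then gives
\[
u_i^{\ama}((\bm v'_i,V'_{-i})) \ge u_i^{\ama}((\bm v_i,V_{-i})) - L(\delta+\eta+\|V_{-i}-V'_{-i}\|) \ge p_i^{\text{OPT-core}}(V_{-i}) - L(\delta+\eta+\|V_{-i}-V'_{-i}\|).
\]
Taking the infimum over $\bm v'_i$, letting $\eta\to0$, and symmetrizing gives
\[
|p_i^{\text{OPT-core}}(V_{-i})-p_i^{\text{OPT-core}}(V'_{-i})|\le L(1+C_H)\|V_{-i}-V'_{-i}\|.
\]
So $p_i^{\text{OPT-core}}$ is in fact Lipschitz in $V_{-i}$, which is stronger than continuity.

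The point that most needs checking is the tie-free closed form in Step 1. Once $u_i^{\ama}$ is written as a difference of maxima, discontinuities of the argmax allocation cannot propagate. Choosing the norm consistently (the paper's $\|\cdot\|$ on $V_{-i}$ against the metric used in $h$) only changes constants.
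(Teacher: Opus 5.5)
Your proposal is correct and follows essentially the same route as the paper. Both write $u_i^{\ama}$ as $\frac{1}{w_i}$ times a difference of two maxima of affine welfares, derive a uniform Lipschitz bound in $(\bm v_i, V_{-i})$ (the paper gets $L = 2\bar w / w_i$), and pass it through the infimum, using the Hausdorff assumption to match points of $\supp(\mathcal F_i(V_{-i}))$ and $\supp(\mathcal F_i(V'_{-i}))$, which gives the same constant $L(1+C_H)$. Your Step 2 also makes rigorous what the paper only asserts, namely that the infimum preserves continuity in the parameters: you note that the Lipschitz bound is uniform in $\bm v_i$ and restrict to a neighborhood where $w_i$ is bounded away from zero.
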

This result demonstrates that the optimal $p_i^{\core}(V_{-i})$ is continuous with respect to both the AMA parameters and the input $V_{-i}$ under these mild assumptions. 
This continuity supports the feasibility of parameterizing $p_i^{\core}$ with a neural network, which is a universal approximator for any continuous function~\cite{hornik1989multilayer,cybenko1989approximation}.

\begin{remark}[Structural Characterization of Correlation-Aware Payment]\label{rem:payment_structure}
    The target correlation-aware payment has an explicit structural form. For a fixed menu $\mathcal A$ of size $K$, the bidder's utility under AMA is determined by the difference between two terms: the maximum affine social welfare over all possible allocations, and the maximum affine social welfare when the current bidder is excluded. Each term is the maximum of linear functions of bidder valuations. Thus, the optimal correlation-aware payment, which equals the minimum possible utility, is \textbf{a difference of two maxima of linear functions}. This structure characterizes the theoretical target function; in our experiments, we use an unconstrained neural network for $p^{\core}$ as it is easy to implement and demonstrates strong empirical performance.
\end{remark}

\paragraph{Generalization Bound of $\regretir$.}
We next provide a guarantee on the generalization of the IR regret term. 
This addresses the concern of whether a mechanism trained on a finite dataset will exhibit similarly low regret on unseen data drawn from the true underlying distribution $\mathcal{F}$. 
Specifically, we aim to show that the empirical $\regretir$, computed on the training set, is a reliable proxy for the true expected $\regretir$ under $\mathcal{F}$. 
Our analysis considers the post-training stage, where AMA parameters are fixed, and only $p^{\core}$ is being learned. The following theorem bounds the difference between the empirical and expected $\regretir$.

\begin{theorem}[Informal version of Theorem~\ref{thm:gen-bound}]\label{thm:concise-gen}
For each $i \in [n]$, let $p^{\core}_i$ be the output of a 3-layer ReLU network whose weights have bounded spectral norms.
Then, for any AMA parameters $(\mathcal A, \bm w, \bm \lambda)$, distribution $\mathcal F$ and i.i.d.\ sample $D=\{V^{(1)},\dots,V^{(K)}\}\sim\mathcal F^{K}$, the following inequality holds uniformly over all such networks (i.e., all choices of parameters $\theta$) with probability $1-\delta$:
\begin{equation*}
    \begin{aligned}
        \Bigl|
        \frac1K\sum_{k=1}^{K} \regretir \bigl(V^{(k)}\bigr) &- 
        \mathbb E_{V}[\regretir (V)]
        \Bigr| \\
        &\leq O\Bigl(\sqrt{\tfrac{\log(1/\delta)}{K}}\Bigr).
    \end{aligned}
\end{equation*}

\end{theorem}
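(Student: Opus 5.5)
The plan is a standard uniform-convergence argument via Rademacher complexity, with the AMA parameters $(\mathcal A,\bm w,\bm\lambda)$ held fixed throughout, as in the post-training stage. For fixed $i$ we define the loss class
\[
\mathcal G_i=\bigl\{V\mapsto \max\{0,\;p_i^{\core}(V_{-i})-u_i^{\ama}(V)\}\;:\;p_i^{\core}\in\mathcal H_i\bigr\},
\]
where $\mathcal H_i$ is the class of 3-layer ReLU networks with spectral-norm-bounded weights. Then $\regretir(V)=\sum_{i=1}^n g_i(V)$ with $g_i\in\mathcal G_i$.

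We first check boundedness of the losses. Since the valuations lie in $[0,1]^{n\times m}$, we have $0\le u_i^{\ama}\le m$, so AMA utilities are bounded. Because the inputs $V_{-i}$ have bounded norm (at most $\sqrt{(n-1)m}$) and the layer norms are bounded, $|p_i^{\core}(V_{-i})|\le B$ for a constant $B$ depending only on the spectral-norm bounds. Hence every $g_i$ takes values in $[0,B+m]$, and $\regretir\in[0,n(B+m)]$.

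By the standard symmetrization plus McDiarmid bound, with probability $1-\delta$, uniformly over the networks,
\[
\Bigl|\tfrac1K\sum_k \regretir(V^{(k)})-\mathbb E\,\regretir\Bigr|\le 2\,\widehat{\mathfrak R}_D\Bigl(\textstyle\sum_i\mathcal G_i\Bigr)+O\bigl(n(B+m)\sqrt{\log(1/\delta)/K}\bigr).
\]
Subadditivity of Rademacher complexity gives $\widehat{\mathfrak R}(\sum_i\mathcal G_i)\le\sum_i\widehat{\mathfrak R}(\mathcal G_i)$. For each $i$, the map $z\mapsto\max\{0,z-u_i^{\ama}(V)\}$ is 1-Lipschitz in $z$, and the shift $u_i^{\ama}(V)$ is a fixed function of the sample point, independent of the network. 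Talagrand's contraction lemma, in its pointwise-varying-Lipschitz-function form, therefore yields $\widehat{\mathfrak R}(\mathcal G_i)\le\widehat{\mathfrak R}(\mathcal H_i)$, where $\mathcal H_i$ is viewed as functions of $V$ through $V_{-i}$.

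The remaining and main step is to bound $\widehat{\mathfrak R}(\mathcal H_i)$ by $O(1/\sqrt K)$. We do this by peeling: ReLU is 1-Lipschitz and positively homogeneous, so for spectrally (or Frobenius) bounded layers each layer contributes a factor of its norm bound, giving $\widehat{\mathfrak R}(\mathcal H_i)\le C\prod_{\ell=1}^{3}M_\ell\,\max_k\|V^{(k)}_{-i}\|/\sqrt K$. One can use Golowich–Rakhlin–Shamir for Frobenius bounds, or Bartlett–Foster–Telgarsky covering for spectral norms with a mild $\log$ and width dependence absorbed into the $O(\cdot)$. This is the obstacle: with only spectral-norm bounds the peeling argument needs a vector-contraction step, which is why I would rely on the Bartlett et al.\ covering-number bound, treating width and the depth-3 norm products as constants.

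Combining these, the total deviation is $O(1/\sqrt K)+O(\sqrt{\log(1/\delta)/K})=O(\sqrt{\log(1/\delta)/K})$ for $\delta<1/e$, uniformly over the network parameters and for any fixed AMA parameters and distribution $\mathcal F$.
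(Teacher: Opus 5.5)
Your argument follows the same route as the paper's proof of Theorem~\ref{thm:gen-bound}. Both bound the losses, symmetrize and apply McDiarmid, split over the $n$ bidders by subadditivity, and remove $\max\{0,\cdot-u_i^{\ama}(V)\}$ by 1-Lipschitz contraction. Your remark that the shift is a fixed function of the sample point, because the AMA parameters are frozen, is exactly what justifies that contraction. Both then finish with a norm-based Rademacher bound for the depth-3 ReLU class. At that last step the paper simply asserts $\widehat R_K(\mathcal P_i)\le B_x\prod_\ell M_\ell\sqrt{2\log(2d)}/\sqrt K$ under spectral-norm bounds alone, without proof. Your caution that spectral norms do not support layer peeling is therefore well placed.

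The one thing to change is the tool you settle on there. The Bartlett--Foster--Telgarsky bound comes out of Dudley's entropy integral with an explicit $\ln K$ factor, in addition to requiring $(2,1)$-norm control. That factor is not a $K$-independent constant, so relying on it gives $O(\ln K/\sqrt K)$, which is not the stated $O(\sqrt{\log(1/\delta)/K})$. Since you already treat the widths as constants, take your other option instead. Using $\|W_\ell\|_F\le\sqrt{\mathrm{rank}\,W_\ell}\,\|W_\ell\|_2$, you get $\prod_\ell\|W_\ell\|_F\le d\prod_\ell M_\ell$, because $W_3$ is a row vector and the other two ranks are at most $d$. Golowich--Rakhlin--Shamir then gives $\widehat R_K(\mathcal H_i)\le(\sqrt{6\log 2}+1)\,B_x\,d\prod_\ell M_\ell/\sqrt K$, with no further dependence on $K$. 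Alternatively, Dudley's integral over a parametric cover of the compact, finite-width parameter set gives $O(\sqrt{p/K})$, where $p$ is the number of weights, again with no $\ln K$. With either substitution your proof delivers the stated rate.
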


This result guarantees that minimizing the empirical regret on a sufficiently large training set allows us to control the true expected regret of the learned mechanism. 
Combined with the continuity of $p^{\text{OPT-core}}$, these results provide theoretical grounding for our proposed training algorithm. 
In the next section, we will evaluate the CA-AMA framework and training algorithm empirically.

\section{Experimental Results}\label{sec:experiment}

This section presents the implementation and empirical evaluation of CA-AMA\footnote{The implementation is available at \url{https://github.com/Haoran0301/CA-AMA.git}}.
We first compare revenue against baselines across multiple distributions and auction settings. 
We then visualize a perfectly correlated case to highlight the limitations of classic AMA and the advantage of CA-AMA. 
These experimental results substantiate the theoretical results and demonstrate the practical advantages of CA-AMA.
All experiments are conducted on a single NVIDIA A800 GPU with 80GB of memory.

\begin{figure*}[t]
    \centering
    \includegraphics[width=0.95\linewidth, trim=0 40 0 0, clip]{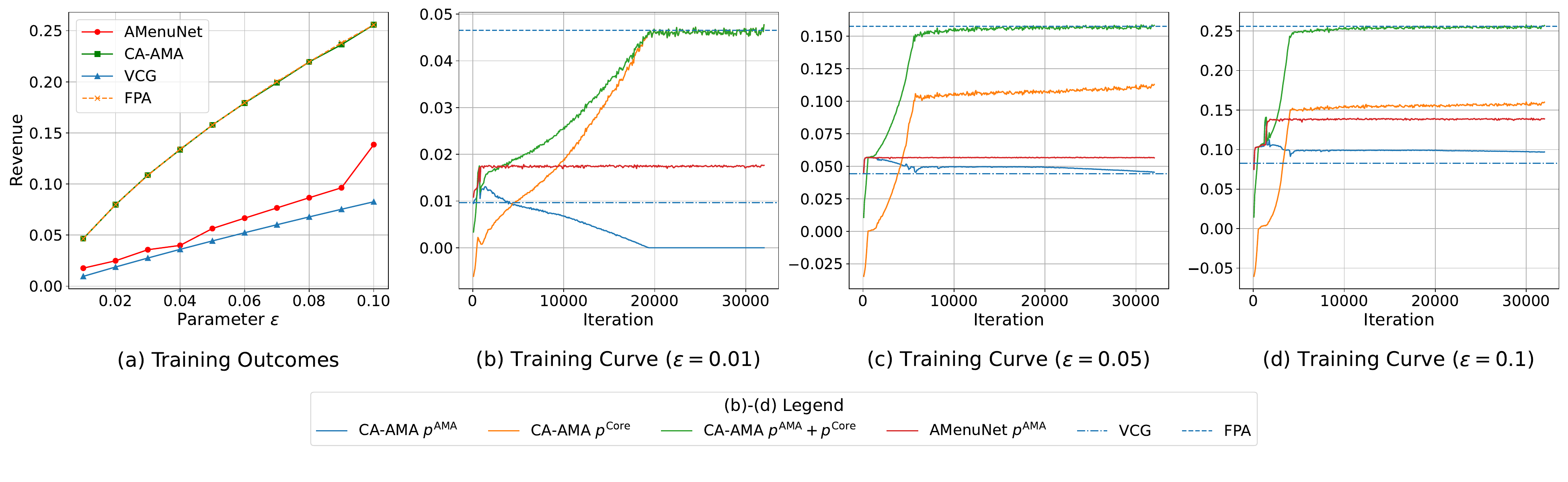}
    \caption{The revenue results and training curves of CA-AMA and Randomized AMA (implemented by AMenuNet~\cite{AMenuNet}) in auctions with the first bidder's valuation $v_1$ following equal revenue distribution on $[\epsilon, 1]$ and the second bidder's valuation $v_2 = \frac{\epsilon}{1 - \epsilon} (1 - v_1)$. As the final $\regretir$ in all cases is less than $1e-5$, it is not plotted in the figure. }
    \label{fig:equal_revenue}
\end{figure*}

\begin{figure*}[t]
    \centering
    \includegraphics[width=0.95\linewidth]{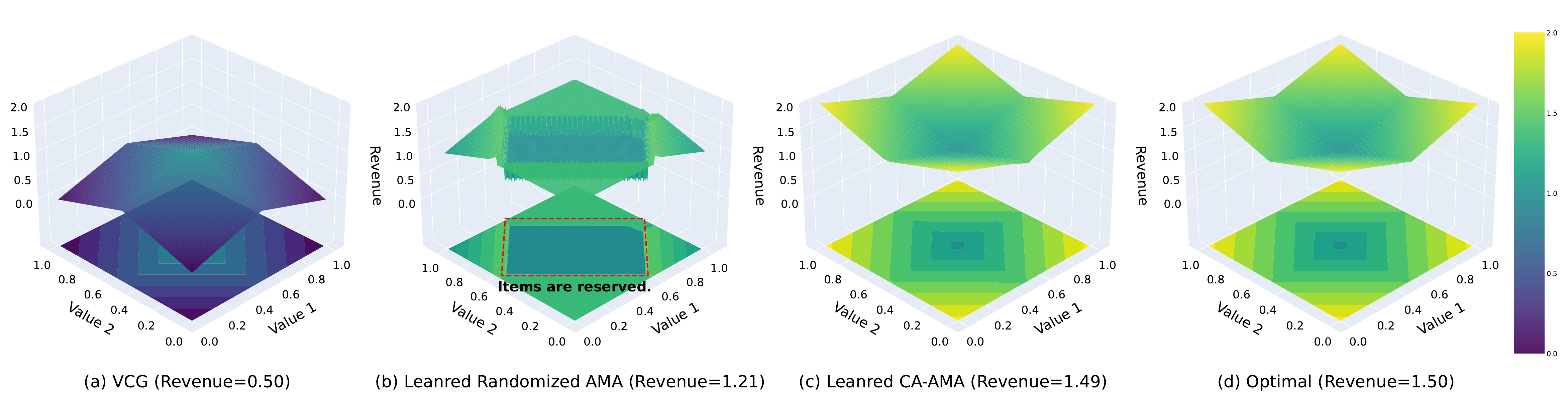}
    \caption{Revenue surfaces of learned CA-AMA and Randomized AMA in a 2-bidder, 2-item perfectly negative linear scenario ($v_{21} = 1 - v_{11}$ and $v_{22} = 1 - v_{12}$). 
    Bidder 1's valuations ($v_{11}, v_{12}$) are on the x-y axes; revenue is on the z-axis. 
    CA-AMA closely approximates the optimal revenue surface, while Randomized AMA often reserves items and has sub-optimal revenue.
    }
    \label{fig:visualization}
\end{figure*}

Our primary comparison is between CA-AMA and \textbf{Randomized AMA}. 
Results for Randomized AMA are obtained using the state-of-the-art neural optimization framework AMenuNet~\cite{AMenuNet}.
We also adapt Conditional Auction Net (CAN)~\cite{huo2025learning} to the multi-item setting by applying it independently to each item; we refer to this extension as \textbf{Item-CAN}. 
The classic \textbf{VCG} auction~\cite{VCG} is included as an additional baseline. Although GemNet~\cite{GemNet} also provides strict DSIC guarantees, we exclude it due to implementation complexity in multi-bidder scenarios. 

For implementation, we follow the over-parameterization strategy for AMA parameters used in AMenuNet~\cite{AMenuNet}. 
We refer the reader to \citet{AMenuNet} for details of network architectures as this is not a primary focus of our work.
The correlation-aware payment component $p^{\core}$ is implemented as a three-layer ReLU MLP. The menu size $|\mathcal{A}|$ is the same between Randomized AMA and CA-AMA within each auction configuration and scaled with problem size. The IR regret target is $R_{\text{target}} = 0.001$; the penalty coefficient is initialized with $\gamma_0 \in \{3, 5, 10\}$, updated using learning rate $\gamma_{\Delta} = 0.01$, and capped at $\bar{\gamma} = 20$. The softmax temperature during mutual training is $500$. 
We run $32{,}000$ iterations and the batch size is $2{,}048$ for smaller settings and $1{,}024$ for larger settings.
For CA-AMA, we balance the training iterations between mutual training and post-training within the total iterations.  
A fixed test set of $20{,}000$ samples is used for evaluation. 

\subsection{Multi-Item Bidder-Correlated Auctions}\label{sec:experiment-general} 
In this subsection, we evaluate our mechanism across two distinct, representative valuation distributions designed to model bidder correlations. These distributions are the \textbf{Dirichlet Value Share} model, which generates complex support-level correlations, and the \textbf{Linear Correlation Mixture} model, which introduces a more direct but probabilistic linear dependency. 

\textbf{Dirichlet Value Share.}
This model generates negatively correlated valuations by assuming bidders draw their values from a shared, latent total value for each item. The generation process for each item $j$ is as follows: (1) A latent total value $T_j$ is drawn from a uniform distribution, $T_j \sim U[0.5, 1]$. (2) A share vector for $n$ bidders, $w_j = (w_{1j}, \dots, w_{nj})$, is drawn from a symmetric Dirichlet distribution, $w_j \sim \text{Dirichlet}(\alpha, \dots, \alpha)$, where $\sum_{i=1}^n w_{ij} = 1$. (3) The final valuation for bidder $i$ is calculated as $v_{ij} = w_{ij} \cdot T_j$. The parameter $\alpha$ controls the correlation strength: a small $\alpha$ leads to a sparse share allocation and strong negative correlation, while a larger $\alpha$ results in more uniform shares and weaker correlation. 

\textbf{Linear Correlation Mixture.}
This model investigates scenarios with a more explicit, probabilistic linear correlation between two bidders. For each item $j$, the first bidder's valuation, $v_{1j}$, is sampled from $U[0, 1]$. 
The second bidder's valuation, $v_{2j}$, with probability $\alpha$, is linearly dependent on $v_{1j}$; otherwise, with probability $1-\alpha$, it is drawn independently. We consider two cases. In the \textbf{Symmetric Case}, with probability $\alpha$, $v_{2j} = 1 - v_{1j}$; otherwise, $v_{2j}$ is drawn independently from $U[0, 1]$. In the \textbf{Asymmetric Case}, with probability $\alpha$, $v_{2j} = (1 - v_{1j})/4$; otherwise, $v_{2j}$ is drawn independently from $U[0, 1/4]$. In our experiments, we test correlation probabilities of $\alpha \in \{0.6, 0.8\}$.

Table~\ref{tab:revenue} summarizes the revenue results across all auction configurations. For CA-AMA, we report revenue both before and after the post-processing step that ensures strict ex-post IR. The results show that CA-AMA consistently outperforms all baselines in revenue, even under the strict ex-post IR constraint, demonstrating its robustness across diverse auction settings. Furthermore, before post-processing, CA-AMA's IR regret is consistently near $0.001$, confirming our training algorithm's effectiveness in satisfying IR constraints. Revenue improvements are most pronounced in scenarios with stronger correlations, such as smaller $\alpha$ in the Dirichlet model and larger $\alpha$ in the Linear Correlation Mixture model.
Among the baselines, Item-CAN performs better than Randomized AMA under stronger correlations but underperforms in weaker settings, consistent with its design to utilize correlation information. However, CA-AMA shows significantly greater improvement than both, underscoring its superior ability to leverage bidder correlations for revenue enhancement.

Beyond revenue, we note that the correlation-aware payment component does not significantly increase computational complexity. Table~\ref{tab:time} compares the training times of CA-AMA and Randomized AMA, showing that they are comparable despite the added complexity.

\subsection{Perfectly Correlated Valuations}\label{sec:experiment-linear}

This section examines a two-bidder auction with perfectly linear correlations. While this is an extreme case, it serves to clearly illustrate the limitations of the classic AMA framework compared to CA-AMA. 

\textbf{Equal Revenue Distribution.}
We first consider a single-item auction where the first bidder's valuation $v_1$ follows an equal revenue distribution on $[\epsilon, 1]$, and the second bidder's valuation is perfectly correlated as $v_2 = \frac{\epsilon}{1 - \epsilon} (1 - v_1)$. This setup mirrors the construction in the proof of Theorem~\ref{thm:ca_ama_correlated}, where we show that for sufficiently small $\epsilon$, the gap between optimal deterministic AMA and CA-AMA can be arbitrarily large.
We explore the performance of CA-AMA and Randomized AMA under varying values of $\epsilon$. 
Figure~\ref{fig:equal_revenue} presents the training dynamics of CA-AMA and Randomized AMA (implemented via AMenuNet~\cite{AMenuNet}). 
For comparison, we also plot the revenue of VCG and the first-price auction (FPA), the latter extracting full surplus and thus representing the optimal revenue alongside the theoretical optimum for Randomized AMA. The results demonstrate that CA-AMA successfully converges to the optimal revenue, significantly outperforming Randomized AMA. By examining the payment components $p^\core$ and $p^\ama$, we observe that CA-AMA effectively identifies the correlation structure in the distribution, with $p^\core$ dominating the total payment in all cases. Although the AMA-derived revenue (CA-AMA $p^\ama$) is lower than that of AMenuNet, the total revenue from CA-AMA ($p^\ama + p^\core$) is substantially higher.

\textbf{Perfect Negative Linear Correlation.}
In Figure~\ref{fig:visualization}, we visualize the revenue surfaces of the learned CA-AMA and Randomized AMA mechanisms in a $2$-bidder, $2$-item setting with perfect negative linear correlation: $v_{21} = 1 - v_{11}$ and $v_{22} = 1 - v_{12}$. The figure displays the extracted revenue (z-axis) as a function of bidder 1's valuations for the two items ($v_{11}$ on the x-axis, $v_{12}$ on the y-axis). CA-AMA's learned revenue surface closely approximates the optimal outcome, highlighting its ability to learn near-optimal allocation and payment rules. In contrast, while Randomized AMA improves upon VCG, it deviates significantly from the optimal surface. Notably, it frequently reserves items even in high-valuation regions, underscoring its inherent limitations in correlated environments.

\begin{table}[t]
    \caption{Training time for $32{,}000$ iterations of Randomized AMA and CA-AMA across settings
    , measured on a single A800 GPU.
    }
    \label{tab:time}
    \centering
    \resizebox{0.95\linewidth}{!}{
    \begin{tabular}{lcccc}
    \toprule
    Setting & $2 \times 2$ & $2 \times 5$ & $3 \times 10$ & $5 \times 5$ \\
    \midrule
    Randomized AMA &  14 min & 26 min & 52 min & 1 h 50 min \\
    CA-AMA & 15 min  & 27 min & 54 min & 1 h 55 min \\
    \bottomrule
    \end{tabular}}
\end{table}
\vspace{-10pt}

\section{Conclusion}\label{sec:conclusion}
In this paper, we address the critical limitation of existing AMAs in bidder-correlated settings, where their inherent VCG-style payment rules restrict flexibility and lead to suboptimal revenue extraction. 
To overcome this, we introduce the CA-AMA, an extended mechanism incorporating an additional correlation-aware payment term. 
We demonstrate that CA-AMA inherently preserves the DSIC property and can theoretically achieve optimal revenue in single-item auctions under certain correlated distributions where classic AMAs perform arbitrarily poorly. 
Furthermore, we propose a two-stage training algorithm for optimizing CA-AMA, supported by theoretical guarantees on continuity and generalization. 
Our extensive experimental evaluations across diverse valuation distributions confirm the empirical effectiveness of CA-AMA, showcasing its ability to achieve significantly improved revenue compared to AMAs with comparable computational efficiency.

\section*{Acknowledgement}
This work was supported by the Natural Science Foundation of China (Grant No. 62572010).

\section*{Impact Statement}

This paper presents work whose goal is to advance the field of algorithmic game theory. 
While CA-AMA advances automated mechanism design theoretically, its practical deployment in environments like ad auctions carries significant economic implications.

\textbf{Broader Impact and Ethical Considerations.} 
By leveraging peer information to set personalized, correlation-aware payments ($p^{\text{core}}$), the mechanism inherently engages in \textit{algorithmic price discrimination}. 
Bidders with identical true valuations may face different effective reserve prices depending on the inferred behavior of others. 
Furthermore, while CA-AMA maintains DSIC for non-cooperative individuals, it may inadvertently encourage unwanted strategic behavior in highly correlated environments. 
Bidders aware of these dynamics might be incentivized to form collusive rings or employ artificial bid-shading to manipulate payment terms, highlighting a crucial trade-off between revenue maximization and market fairness.

Mechanism designers deploying CA-AMA should carefully consider these implications, particularly in applications where fairness and transparency are paramount. We encourage future research to explore regulatory frameworks and algorithmic modifications that can mitigate these concerns while preserving the revenue benefits of correlation-aware mechanisms.

\bibliography{reference}
\bibliographystyle{icml2026}

\newpage
\appendix
\onecolumn

\section{Detailed Related Work}\label{sec:appendix_related}

\subsection{Affine Maximizer Auctions}
Affine maximizer auctions generalize the seminal VCG auction by assigning positive weights to bidders and boost variables to allocations, modifying the objective to maximize affine social welfare. 
By adjusting these parameters, AMAs can represent a wide range of auction mechanisms while inherently satisfying dominant strategy incentive compatibility and individual rationality. 
There are previous studies consider different subclasses of AMA for more desirable theoretical characterizations.
These subclasses includes Virtual Valuations Combinatorial Auctions (VVCAs)~\cite{likhodedov2004methods,VVCA2005,VVCA2015}, $\lambda$-auctions~\cite{jehiel2007mixed}, mixed bundling auctions~\cite{tang2012mixed}, and bundling-boosted auctions~\cite{balcan2021learning}.

The expressiveness of AMAs in comparison to arbitrary auction mechanisms has been formally analyzed in~\cite{lavi2003towards}, which showed that AMAs can represent optimal auctions under certain conditions. 
Beyond expressiveness, there are studies consider the optimization of an AMA: \citet{VVCA2015} presented optimization methods for finding optimal AMA mechanisms, while \citet{balcan2016sample,balcan2018general} studied the sample complexity required to learn such mechanisms. More recently, differentiable optimization techniques have been applied to this setting. For example, LotteryAMA~\cite{LotteryAMA} and AMenuNet~\cite{AMenuNet} introduce differentiable approaches to optimize AMA-based auctions using neural networks. 

Our work proposes a new framework, CA-AMA, that extends the classical AMA by incorporating bidder correlations. We theoretically characterize its expressiveness relative to traditional AMA in single-item settings and empirically evaluate optimization algorithms for learning revenue-optimal CA-AMA mechanisms across various distributional settings.

\subsection{Differentiable Economics for Auctions}

Differentiable economics, which leverages neural networks as flexible function approximators and optimizes them using gradient-based methods, is a recent and active line of research in automated mechanism design
Existing work in this area for revenue maximization can be broadly categorized into \emph{characterization-free} and \emph{characterization-based approaches}.

Characterization-free methods do not assume a predefined structure for the mechanism. 
The foundational work, RegretNet~\cite{dutting2023optimal}, implements the allocation and payment rules as neural networks conditioned on bid profiles. 
Its loss function jointly optimizes revenue and penalizes violations of DSIC and IR. 
Building on this, \citet{feng2018deep} incorporated budget constraints, while \citet{golowich2018deep} generalized the framework to handle various objectives and constraints. 
\citet{rahme2021auction} reframed the design problem as a two-player game with a more efficient loss.
Further extensions include PreferenceNet~\cite{peri2021preferencenet}, which incorporates fairness preferences, and EquivariantNet~\cite{rahme2021permutation}, a permutation-equivariant architecture tailored for symmetric auctions. 
Transformer-based methods, such as those introduced by \citet{ivanov2022optimal} and \citet{duan2022context}, improve performance in settings with contextual information. 
\citet{hertrich2023mode} applied mode connectivity to provide a theoretical explanation for the empirical success of differentiable economics.
The combinatorial auction extensions CANet and CAFormer~\cite{pham2025advancing} bring these ideas into richer valuation domains.

Characterization-based approaches, by contrast, restrict optimization to a predefined family of mechanisms. 
AMAs are particularly suitable for this due to their inherent satisfaction of DSIC and IR. 
LotteryAMA~\cite{LotteryAMA} introduces randomized allocation menus over AMA structures, which simplifies optimization. 
AMenuNet~\cite{AMenuNet} builds upon this with a more expressive architecture and applies it to contextual auctions. 
Further developments include contextual AMAs for ad auctions~\cite{li2024deep}, dynamic AMA designs~\cite{curry2024automated}, and zeroth-order optimization for deterministic AMA mechanisms~\cite{ODVVCA}. 
In addition, menu-based mechanisms have also been treated with differentiable tools. MenuNet~\cite{MenuNet} optimizes revenue for single-bidder auctions, while GemNet~\cite{GemNet} extends to multi-bidder cases by incorporating over-allocation penalties and post-processing using mixed-integer linear programming.

Our work fits within the characterization-based paradigm. We extend AMA to define CA-AMA, a mechanism that incorporates bidder correlations through a novel correlation-aware payment rule. This new structure retains the theoretical guarantees of classic AMA while significantly improving revenue, both in theory and in practice.

\subsection{Auctions with Bidder Correlations}

Modeling bidder correlation is a critical aspect of realistic auction settings. The foundational Crémer-McLean results~\cite{cremer1985optimal,cremer1988full} demonstrate that under certain distributional conditions, it is possible to design mechanisms that are DSIC, interim IR, and extract the full surplus. However, like the Myerson auction, these mechanisms assume full knowledge of the valuation distribution and thus are primarily theoretical.

Subsequent work has relaxed this assumption by exploring scenarios in which the auctioneer has incomplete information. 
\citet{fu2014optimal}, \citet{albert2017mechanism}, and \citet{yang2021learning} studied the sample complexity needed to approximate Crémer-McLean-style mechanisms from empirical data. Because computing the optimal mechanism under general correlated settings is NP-hard, approximation algorithms have also been proposed. 
For instance, \citet{dobzinski2011optimal} designed polynomial-time mechanisms that achieve provable approximation guarantees under correlated priors. 
In contrast, \citet{papadimitriou2015optimal} and \citet{caragiannis2016limitations} provided upper bounds by constructing distributions where any polynomial-time algorithm performs poorly.
There are works analyzing the mechanism's robustness to correlation. 
\citet{bei2019correlation} studied the correlation-robust design problem, while \citet{zhang2021correlation} and \citet{he2022correlation} showed that the second-price auction is asymptotically optimal in worst-case correlated environments. 

More recently, \citet{prasad2023bicriteria} modified VCG auction with side information, achieving improvements in both revenue and social welfare. This is followed by \citet{prasad2025weakest,balcan2025increasing}, who studied further game theoretical properties of such mechanisms. The idea of leveraging side information is highly related to our CA-AMA framework, which also utilizes additional information from bidder correlations to enhance revenue.

These studies predominantly focus on theoretical designs for single-item auctions. In contrast, our goal is to demonstrate both the theoretical and empirical benefits of CA-AMA in richer combinatorial settings. The most closely related works are \citet{huo2025learning} and \citet{feldman2021optimal}. The former proposes a score-based payment rule, optimized through a max-min neural architecture to approximate optimal revenue in single-item settings. The latter provides a theoretical analysis of the gap between ex-post and ex-interim IR mechanisms, showing that AMA can perform arbitrarily poorly in the presence of correlations. Our results extend this by showing that the performance gap holds even when comparing to ex-post IR mechanisms, and we demonstrate that CA-AMA overcomes this gap.

\section{Omitted Proofs in Section~\ref{sec:theoretical}}\label{sec:appendix_proofs_caama}

\amafailure*
\begin{proof}
    See the proof of Theorem~\ref{thm:ca_ama_correlated}.
\end{proof}

\caamadsic*
\begin{proof}
    We verify the DSIC property by definition. 
    For any bidder $i$, its true valuation $\bm v_i$, other bidders' bid $V_{-i}$ and possible bid $\bm b_i$, we define $A_{k^*} := g^\ama(\bm v_i, V_{-i})$, $A_{k'^*} := g^\ama(\bm b_i, V_{-i})$ and $A_{k^*} := g^\ama(0, V_{-i})$.
    Then, we directly compare the utility under truthful report $u_i(\bm v_i, (\bm v_i, V_{-i}))$ and the utility when reporting $\bm b_i$, $u_i(\bm v_i, (\bm b_i, V_{-i}))$. 
    For simplicity, let $V_{-i} = (\bm v_1, \cdots, \bm v_{i-1}, \bm v_{i+1}, \cdots, \bm v_{n})$.
    \begin{equation*}
        \begin{aligned}
            u_i(\bm v_i, (\bm v_i, V_{-i})) &= u^{\ama}_i(\bm v_i, (\bm v_i, V_{-i})) - p_i^\core(V_{-i}) \\ 
            &\overset{(a)}{\geq} u^{\ama}_i(\bm v_i, (\bm b_i, V_{-i})) - p_i^\core(V_{-i}) =u_i(\bm v_i, (\bm b_i, V_{-i})).
        \end{aligned}
    \end{equation*}
    The inequality (a) is from the DSIC property of AMA. 

     
\end{proof}

\begin{theorem}[The first part of Theorem~\ref{thm:ca_ama_correlated}]
    In single-item auctions, for any number of bidders $n$: If $\mathcal{F}$ is bidder-independent, then $\rev^{\dca}_{\mathcal{F}} = \rev^{\dama}_{\mathcal{F}}$.
\end{theorem}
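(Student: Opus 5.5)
The inequality $\rev^{\dca}_{\mathcal F}\ge\rev^{\dama}_{\mathcal F}$ is immediate: taking $p^{\core}\equiv 0$ recovers any AMA. So the plan is to prove the reverse inequality. I will show that for every deterministic single-item CA-AMA satisfying IR there is a deterministic AMA with at least the same expected revenue.

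First I make the single-item deterministic AMA explicit. The menu is $A_0$ (no allocation, boost $\lambda_0$) and $A_i$ (item to bidder $i$, boost $\lambda_i$). Bidder $i$ wins iff $w_iv_i+\lambda_i$ exceeds
\[
M_i(V_{-i}):=\max\{\lambda_0,\max_{j\neq i}(w_jv_j+\lambda_j)\}.
\]
When it wins, it pays the threshold $t_i(V_{-i})=(M_i(V_{-i})-\lambda_i)/w_i$. Hence $u_i^{\ama}=(v_i-t_i(V_{-i}))^+$.

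Under bidder independence, $\supp(\mathcal F_i(V_{-i}))=\supp(\mathcal F_i)=:S_i$ does not depend on $V_{-i}$. IR therefore forces
\[
p_i^{\core}(V_{-i})\le\inf_{v_i\in S_i}(v_i-t_i)^+=(\underline v_i-t_i(V_{-i}))^+,\qquad \underline v_i:=\inf S_i .
\]
The revenue-optimal choice attains this bound, so it suffices to handle $p_i^{\core}=(\underline v_i-t_i)^+$. On the support, bidder $i$'s total payment is then exactly $\mathbf 1[i\text{ wins}]\cdot\max\{t_i(V_{-i}),\underline v_i\}$. Indeed, whenever $i$ loses we have $t_i\ge v_i\ge\underline v_i$, so the core term vanishes. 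In words, CA-AMA under independence is just the AMA with an additional personalized reserve $\underline v_i$ for each bidder. That reserve is never binding at the allocation level, because it lies below every value in the support.

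The remaining step is to realize the same revenue by a genuine AMA. Call $E_i=\{V_{-i}: t_i(V_{-i})<\underline v_i\}$ the event where the core payment is positive; on $E_i$, bidder $i$ wins for every $v_i\in S_i$. The key structural observation uses the product structure of the support together with determinism: the events $E_i$ and $E_j$ for $i\ne j$ cannot occur at a common profile. Otherwise both bidders would be forced to win the single item. I will also use the fact that on $E_i$ the quantity $M_i$ is below $w_i\underline v_i+\lambda_i$. I would then modify the parameters so that bidder $i$'s competing term becomes $\max\{M_i,\,w_i\underline v_i+\lambda_i\}$. This makes the new threshold exactly $\max\{t_i,\underline v_i\}$ while leaving the winner unchanged on the support.

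The natural first attempt is to raise the null boost to
\[
\lambda_0'=\max\Bigl\{\lambda_0,\ \max_{i:\Pr(E_i)>0}\bigl(w_i\underline v_i+\lambda_i\bigr)\Bigr\},
\]
and then check two things. First, the allocation on $\supp(\mathcal F)$ is unchanged, up to measure-zero ties. Second, every winner's threshold weakly increases to at least $\max\{t_i,\underline v_i\}$.

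This last verification is where I expect the main obstacle. Raising $\lambda_0$ also raises the thresholds of bidders $j\ne i$, and it could make some of them lose at profiles where they previously won. That would lose revenue rather than gain it. To handle this, I would use the disjointness of the $E_i$ and the bound $w_j v_j+\lambda_j > w_i\underline v_i+\lambda_i$ that holds whenever $j$ wins against such an $i$. If the global shift of $\lambda_0$ still fails, the fallback is to argue per bidder: at most one bidder can have $\Pr(E_i)>0$ in a way that interacts with others. For that bidder, lower $\lambda_i$ and raise the others' competing terms correspondingly, and compare revenue profile by profile.
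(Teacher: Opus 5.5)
Your reduction is sound, and your ``first attempt'' is exactly the paper's construction: raising the null boost to $w_{i^*}\underline v_{i^*}+\lambda_{i^*}$ is equivalent to the paper's shift $\lambda_i'=\lambda_i-b$ of every bidder boost, with $b=w_{i^*}l_{i^*}+\lambda_{i^*}-\lambda_0$. However, the proposal stops at the step that carries the proof, and the lemma you state is too weak to carry it.

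Disjointness of $E_i$ and $E_j$ at a common profile does not, as stated, exclude two bidders having positive core payments at different profiles. That is exactly the situation in which your $\lambda_0'$ (a maximum over several $i$) could displace winners. The missing idea is the global exclusivity that the paper proves. If $V_{-i}\in E_i$ for some $V_{-i}$ in the support, then
\[
w_i\underline v_i+\lambda_i>M_i(V_{-i})\ge\max\Bigl\{\lambda_0,\ \max_{j\neq i}\bigl(w_j\underline v_j+\lambda_j\bigr)\Bigr\}.
\]
If also $V'_{-j}\in E_j$ for some $j\neq i$, then symmetrically $w_j\underline v_j+\lambda_j>w_iv_i'+\lambda_i\ge w_i\underline v_i+\lambda_i$, a contradiction. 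So at most one bidder $i^*$ ever has a positive core payment, and $t_j\ge\underline v_j$ on the whole support for every $j\neq i^*$. This also follows from your pointwise version: $M_i$ is nondecreasing in $V_{-i}$, so any nonempty $E_i$ contains the bottom profile. The bottom profile lies in the product support, so both events would occur there.

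With this lemma, the obstacle you anticipate does not arise. Set $\lambda_0'=w_{i^*}\underline v_{i^*}+\lambda_{i^*}$; then $\lambda_0<\lambda_0'\le w_{i^*}v_{i^*}+\lambda_{i^*}$ at every profile in the support.
\begin{itemize}
\item For $j\neq i^*$, the term $k=i^*$ inside $\max_{k\neq j}(w_kv_k+\lambda_k)$ already dominates both null boosts. So bidder $j$'s competing term $\max\{\lambda_0',\max_{k\neq j}(w_kv_k+\lambda_k)\}$ equals the old $M_j$: nobody is displaced, and $t_j$ is unchanged.
\item Bidder $i^*$'s competing term becomes $\max\{M_{i^*},\lambda_0'\}$, so its new threshold is exactly $\max\{t_{i^*},\underline v_{i^*}\}$.
\end{itemize}
The new AMA therefore reproduces the CA-AMA revenue profile by profile (up to ties), and your fallback is unnecessary. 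As written, though, you leave this verification open and even doubt that it works, so the proposal is not yet a proof. The exclusivity lemma is the piece you need to add.
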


\begin{proof}
    As bidders are independent, we assume that each valuation $\inf\{v_i : v_i\in \supp(\mathcal F_i)\} = l_i$ for each $i \in [n]$. 
    \emph{We show that when fixing $\mathcal A$ to be set of all deterministic allocations, for an optimal solution $(\bm w, \bm\lambda, (p_i^\core)_{i=1}^n)$ of Problem~\ref{problem:optimal_revenue_caama}, we can construct a feasible solution for Problem~\ref{problem:optimal_revenue_ama} which brings at least the same revenue.} 
    This is sufficient to say that the $\rev^{\dama} \geq \rev^{\dca}$.

    Let $\mathcal A$ be $\{A_0, A_1, A_2, \cdots, A_n\}$, where $A_i$ is the outcome that allocates the item to bidder $i$ and $A_0$ is the outcome that reserves the item. 
    The optimal solution of the CA-AMA is given by $(\bm w, \bm \lambda, (p_i^\core)_{i=1}^n)$. 
    Consider two cases: 

    If for any $i$ and $\bm v_{-i}$, there is $p_i^\core(\bm v_{-i}) = 0$, then the revenue of the CA-AMA is equal to the revenue from the AMA parameterized by $(\bm w, \bm\lambda)$.
    \emph{Therefore, below we consider the case that there is at least one $i^*$ and $\bm v_{-i^*}$, such that $p_{i^*}^\core(\bm v_{-i^*}) > 0$.}
    
    Firstly, the condition $p_{i^*}^\core(\bm v_{-{i^*}}) > 0$ means that $g(l_{i^*}, \bm v_{-{i^*}}; \bm w, \bm \lambda) = A_{i^*}$.
    Otherwise, the utility of bidder ${i^*}$ when it realizes its least valuation $l_{i^*}$ is negative, violating the IR constraint. 
    From $g(l_{i^*}, \bm v_{-{i^*}}; \bm w, \bm \lambda) = A_{i^*}$, we can get the following condition: 
    \begin{equation*}
        w_{i^*} l_{i^*} + \lambda_{i^*} > \max\{\max_{j \neq {i^*}} w_j v_j + \lambda_j, \lambda_0\} \geq \max\{ \max_{j \neq {i^*}} w_j l_j + \lambda_j, \lambda_0\} \geq \lambda_0.
    \end{equation*}
    Note that this also implies that for any $j \neq {i^*}$, $p_j^\core(\bm v_{-j}) \equiv 0$ for any $\bm v_{-j}$. 
    Otherwise, we have $w_{i^*} l_{i^*} + \lambda_{i^*} > w_j l_j + \lambda_j$ and $w_{i^*} l_{i^*} + \lambda_{i^*} < w_j l_j + \lambda_j$ simultaneously. 

    Secondly, we construct a new AMA based on $(\bm w, \bm \lambda)$.
    Without loss of generality, we set $\lambda_0 = 0$ and define $b := w_{i^*} l_{i^*} + \lambda_{i^*} - \lambda_0 > 0$.
    The new parameters $(\bm w', \bm \lambda')$ is conducted as $\bm w' = \bm w$, $\lambda'_i = \lambda_i - b$ for all $i \in \{1, 2, \cdots, n\}$, and $\lambda'_0 = \lambda_0$. 
    
    We analyze the revenue brought by AMA with parameters $(\bm w', \bm \lambda')$. 
    Our goal is to show that for any $\bm v \in \supp(\mathcal F)$, the payment of the AMA parameterized by $(\bm w', \bm \lambda')$ is at least the payment of the CA-AMA parameterized by $(\bm w, \bm\lambda, (p_i^\core)_{i=1}^n)$. 
 
    For any $\bm v$, we obverse that
    \begin{equation*}
        \max_{j} w'_j v_j + \lambda'_j \geq w_{i^*} v_{i^*} + \lambda'_{i^*} \geq  w_{i^*} l_{i^*} + \lambda'_{i^*} = w_{i^*} l_{i^*} + \lambda_{i^*} - b = \lambda_0.
    \end{equation*}
    \emph{Therefore, the item will always be allocated in the new AMA.  
    Furthermore, as the boost variable $\lambda$ other than $A_0$ changes to the same value, the allocation remains the same.} 
    For this $\bm v$, we consider two cases. 
    
    \underline{1. The item is allocated to bidder $j \neq {i^*}$.}

    As $p_j^\core = 0$, the original revenue comes solely from $p_j^\ama$. 
    In new AMA mechanism, the $p_j^\ama$ is computed by:
    \begin{equation*}
        \begin{aligned}
            w'_j \;p_j^\ama(\bm v; \bm w', \bm \lambda') &= \max\{A_0, \max_{k \neq j} w'_k v_k + \lambda'_k\} - \lambda'_j \\
            &= \max\{A_0, \max_{k \neq j} w_k v_k + \lambda_k - b\} - \lambda_j + b \\
            &\geq \max\{A_0, \max_{k \neq j} w_k v_k + \lambda_k\} - b - \lambda_j + b \\
            &=\max\{A_0, \max_{k \neq j} w_k v_k + \lambda_k\} - \lambda_j \\
            &=w_j \; p_j^\ama(\bm v; \bm w, \bm \lambda) = w'_j \; p_j^\ama(\bm v; \bm w, \bm \lambda).
        \end{aligned}
    \end{equation*}

    \underline{2. The item is allocated to bidder ${i^*}$.}

    We compare the revenue between $p_{i^*}^\ama(\bm v; \bm w', \bm \lambda')$ and $p_{i^*}^\ama(\bm v; \bm w, \bm \lambda) + p_i^\core(\bm v_{-{i^*}})$.
    Firstly, 
    \begin{equation*}
        \begin{aligned}
            w'_{i^*} \; p_{i^*}^\ama(\bm v; \bm w', \bm \lambda') &= \max\{A_0, \max_{k \neq {i^*}} w'_k v_k + \lambda'_k\} - \lambda'_{i^*} \\
            &= \max\{A_0, \max_{k \neq {i^*}} w_k v_k + \lambda_k - b\} - \lambda_{i^*} + b \\
            &\geq \lambda_0 - \lambda_{i^*} + b \\
            &=w_{i^*} l_{i^*} + \lambda_{i^*} - \lambda_0 + \lambda_0 - \lambda_{i^*} \\
            &=w_{i^*} l_{i^*}.
        \end{aligned}
    \end{equation*}
    For $p_{i^*}^\core(\bm v_{-{i^*}})$, by IR constraint, we have,
    \begin{equation*}
    \begin{aligned}
        p_{i^*}^\core(\bm v_{-{i^*}}) &\leq l_{i^*} - p_{i^*}^\ama(l_{i^*}, \bm v_{-{i^*}}; \bm w, \bm \lambda)  \\
        &= l_{i^*} - \max\{A_0, \max_{k \neq {i^*}} w_k v_k + \lambda_k\} + \lambda_{i^*} \\
        &= l_{i^*} - p_{i^*}^\ama(\bm v; \bm w, \bm \lambda).
    \end{aligned}
    \end{equation*}
    Therefore, $p_{i^*}^\core(\bm v_{-{i^*}}) + p_{i^*}^\ama(\bm v; \bm w, \bm \lambda) \leq l_{i^*} \leq p_{i^*}^\ama(\bm v; \bm w', \bm \lambda')$.
    
    Hence, for any valuation profile $\bm w$, the revenue by AMA $(\bm w', \bm \lambda')$ is at least the revenue given by CA-AMA $(\bm w, \bm \lambda, (p_i^\core)_{i=1}^n)$. 
\end{proof}

\begin{theorem}[The second part of Theorem~\ref{thm:ca_ama_correlated}]
    In single-item auctions, for any number of bidders $n$ and any $\epsilon > 0$, there exists a distribution $\mathcal{F}$ such that $\rev^{\dama}_{\mathcal{F}} \leq \epsilon \cdot \rev_{\mathcal{F}}$, while $ \rev^{\dca}_{\mathcal{F}} = \rev_{\mathcal{F}}$. Furthermore, $\rev^{\sama}_{\mathcal{F}} < \rev_{\mathcal{F}}$ for any $S$.
\end{theorem}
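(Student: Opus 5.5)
The plan is to take the construction from the experiments section. Let $v_1$ follow the equal revenue distribution on $[\epsilon,1]$, that is $\Pr[v_1\ge x]=\epsilon/x$ for $x\in[\epsilon,1]$ with an atom of mass $\epsilon$ at $1$. Set $v_2=\frac{\epsilon}{1-\epsilon}(1-v_1)$, and let bidders $3,\dots,n$ have value $0$ deterministically. Then $v_2\in[0,\epsilon]$ and $v_2\le v_1$ on the support, and each of $v_1,v_2$ determines the other. Full surplus extraction is DSIC and IR in the paper's sense (e.g.\ give the item to bidder 1 at price $1-\frac{1-\epsilon}{\epsilon}v_2$). Hence
\[
\rev_{\mathcal F}=\mathbb E[v_1]=\epsilon\bigl(1+\ln(1/\epsilon)\bigr),
\]
and no mechanism can exceed this welfare.

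\textbf{CA-AMA attains $\rev_{\mathcal F}$.} Take $\lambda_0=\lambda_1=0$, $\lambda_j=-2$ for $j\ge 2$, and all weights equal to $1$. Bidder 1 then always wins, since $v_1\ge\epsilon>0$, and has $p_1^{\ama}=0$. Every other bidder has $p_j^{\ama}=0$. Set $p_1^{\core}(V_{-1})=1-\frac{1-\epsilon}{\epsilon}v_2$, which equals $v_1$ on the support, and set all other $p_j^{\core}\equiv 0$. Bidder 1's utility is then exactly $0$ on $\supp(\mathcal F)$, so IR holds and the revenue is $\mathbb E[v_1]$. DSIC follows from Proposition~\ref{thm:ca_ama_dsic}.

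\textbf{Deterministic AMA is at most $2\epsilon$.} Parametrize the support curve by $v_1$. Bidder 1 wins iff $w_1v_1+\lambda_1\ge\max\{\lambda_0,\max_{k\ge2}(w_kv_k+\lambda_k)\}$. Along the curve the left side increases in $v_1$ and the right side weakly decreases (because $v_2$ decreases), so the winning set is an up-set $\{v_1\ge x_0\}$. Bidder 1's payment $(\max\{\lambda_0,\max_{k\ne1}(w_kv_k+\lambda_k)\}-\lambda_1)/w_1$ is nondecreasing in $v_2$, hence nonincreasing in $v_1$ along the curve. By IR at the threshold point it is at most $x_0$; if $x_0$ is not attained, take a limit. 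Therefore bidder 1 contributes at most $x_0\Pr[v_1\ge x_0]\le\epsilon$. Bidder 2 pays at most $v_2\le\epsilon$, and the others pay $0$. So
\[
\rev^{\dama}_{\mathcal F}\le 2\epsilon=\frac{2}{1+\ln(1/\epsilon)}\,\rev_{\mathcal F}.
\]
This ratio tends to $0$; rescaling $\epsilon$ gives any target ratio.

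\textbf{Randomized AMA is strictly suboptimal for every $S$.} This is the main obstacle, because payment monotonicity in $v_2$ is less transparent for lotteries. I would argue by contradiction using DSIC and the monotonicity of the AMA argmax. Full surplus extraction forces bidder 1 to receive the item with probability $1$ and to have zero utility almost everywhere on the support. In particular this holds at the atom $(1,0)$, which has mass $\epsilon$, and (up to a measure-zero/limiting argument) at points near $(\epsilon,\epsilon)$. Let $A_{k^*}$ be the allocation chosen at $(\epsilon,\epsilon)$; it gives bidder 1 the item fully and bidder 2 nothing. Lowering $v_2$ to $0$ leaves $\asw(k^*)$ unchanged and weakly lowers every other $\asw(k)$, so $k^*$ is still chosen at $(\epsilon,0)$ with a consistent tie-break. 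Bidder 1's allocation at $(\epsilon,0)$ is therefore $1$. The standard AMA/DSIC utility inequality at fixed $v_2=0$ then gives
\[
u_1(1,0)\ge u_1(\epsilon,0)+(1-\epsilon)\cdot 1\ge 1-\epsilon>0 .
\]
This contradicts zero utility on the atom, so $\rev^{\sama}_{\mathcal F}<\rev_{\mathcal F}$. The delicate points are the tie-breaking convention and the passage from ``almost everywhere'' to the specific point $(\epsilon,\epsilon)$. I would handle the latter by using points $(x,v_2(x))$ with $x\downarrow\epsilon$ in place of the endpoint, and since $(\epsilon,0)$ lies outside the support I rely only on DSIC there, not on IR.
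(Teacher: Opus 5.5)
Your CA-AMA and deterministic-AMA parts follow the paper's proof. Bidder 1's winning set along the support curve is an up-set. Bidder 1's payment is nondecreasing in $v_2$, hence nonincreasing in $v_1$. IR at the threshold caps it at $x_0$, and the equal-revenue identity gives $x_0\Pr[v_1\ge x_0]\le\epsilon$. Your boosts $\lambda_j=-2$, which make $p_1^{\ama}\equiv 0$, are cleaner than the paper's $\bm\lambda=0$, under which $p_1^{\ama}=v_2$ would have to be subtracted from the core term.

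The randomized part takes a genuinely different route. The paper shows the chosen allocation is piecewise constant along the curve, with at most $S$ pieces. It then examines the bottom piece $[\epsilon,v_1^1)$. Either bidder 1's share there is below $1$, which loses welfare, or it equals $1$, and then monotonicity plus IR caps the payment at $\epsilon$ on that interval. You instead plant an atom at $(1,0)$ and use a DSIC rent argument. Full allocation at one interior support point survives lowering $v_2$ to $0$, which places you at the same $V_{-1}$ as the atom, and DSIC then forces $u_1(1,0)>0$.

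Your route dispenses with the change-point lemma. That lemma's proof in the paper really only shows the two allocations tie along the curve, so it silently needs a tie-breaking convention. Your route also never uses finiteness of $S$. Its cost is the atom, needed so that the single profile $(1,0)$ must be fully extracted; the paper's argument works for the atomless density. A Lipschitz argument near $v_1=1$ would remove the atom, since $u_1$ is $2w_2/w_1$-Lipschitz in $v_2$.

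There are three small repairs.
\begin{itemize}
\item You need neither the point $(\epsilon,\epsilon)$ nor a limit $x\downarrow\epsilon$. Any single support point $(x,v_2(x))$ with $x<1$ at which $g_1=1$ and $u_1=0$ suffices, and such points carry mass $1-\epsilon$. It gives $u_1(1,0)\ge 1-x>0$.
\item Contrary to your remark, you do use IR off the support, namely $u_1(x,0)\ge 0$. This is harmless, because an AMA's utility $u_1(V)=\frac{1}{w_1}\bigl(\max_k\asw(k;V)-\max_k\asw_{-1}(k;V)\bigr)$ is nonnegative at every profile.
\item The same formula removes the tie-breaking worry. The $\asw_{-1}$ term depends only on $V_{-1}=0$, so $u_1(1,0)-u_1(x,0)=\frac1{w_1}\bigl(\max_k\asw(k;(1,0))-\max_k\asw(k;(x,0))\bigr)\ge 1-x$. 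This holds because $k^*$ attains the second maximum and $\asw(k^*;\cdot)$ rises by exactly $w_1(1-x)$.
\end{itemize}

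Finally, like the paper, you only show that no single randomized AMA attains $\rev_{\mathcal F}$, and no uniform gap is possible here. Take the menu consisting of the reserve allocation and the lottery that gives bidder 1 probability $1-\delta$ and bidder 2 probability $\delta$. Set $w_1=1$, $w_2=\frac{(1-\delta)(1-\epsilon)}{\delta\epsilon}$, and boosts $0$ for the lottery and $1-\delta-\eta$ for the reserve. On the support the lottery's affine welfare is identically $1-\delta$, so it is always chosen and $p_1^{\ama}=(1-\delta)v_1-\eta$. The revenue is then at least $(1-\delta)\rev_{\mathcal F}-\eta\to\rev_{\mathcal F}$, and the same construction works for the paper's distribution. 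So the last claim holds only in the sense that the maximum in AMA-OPT is not attained.
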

\begin{proof}
    The valuation distribution for the single-item auction is set as follows: 
    Bidder $1$'s valuation follows a equal revenue distribution on $[\epsilon, 1]$, i.e., the pdf is given by $f(v) = \frac{\epsilon}{(1 - \epsilon) v^2}$. 
    The other bidders' valuations are the same and are linear to $v_1$, $v_i = \epsilon_1 \cdot (1 - v_1)$, for all $i \geq 2$. 
    We require $0 < \epsilon_1 < \epsilon < 1$, with specific values to be determined later.
    
    \underline{Part 1: Showing $\rev_{\mathcal F} = \rev^\dca_{\mathcal F}$.}
        
    For this distribution, it is possible to extract the full social surplus $\max_{i \in [n]} v_i$ as payment for every valuation profile $\bm v$. 
    In the CA-AMA framework, we achieve this by setting:  $p_1^\core(\bm v_{-1}) = (1 - v_2 / \epsilon)$, $\mathcal A$ to be set of all deterministic allocation, $\bm w = 1$, $\bm \lambda_k = 0$ for all $k \in [S]$. 
    By this, the revenue is the same as first-price auction:
    \[
    \rev_{\mathcal F} = \rev^\ca_{\mathcal F} = \int_{\epsilon}^1 f(v) v \; \mathrm d v =\int_{\epsilon}^1 \frac{\epsilon}{(1 - \epsilon) v} \; \mathrm d v = \frac{\epsilon \ln (1 / \epsilon)}{1 - \epsilon}.
    \]

    \underline{Part 2: Showing the relationship between $\rev^\dama_{\mathcal F}$ and $\rev_{\mathcal F}$.}

    In deterministic AMA, $\mathcal A$ is fixed to be $\{A_0, A_1, A_2, \cdots, A_n\}$, where $A_i$ is the outcome that allocates the item to bidder $i$ and $A_0$ is the outcome that reserves the item. 
    We first show the following lemma:
    \begin{lemma}
        Under the constructed valuation, for any bidder $1$'s valuation $v < v'$ and AMA parameter $(\bm w, \bm \lambda)$, if bidder $1$ wins the item on $v$, then it also wins the item on $v'$.
    \end{lemma}
    \begin{proof}
        When bidder $1$'s valuation is $v$ and wins the item, we have:
        \begin{equation*}
            w_1 v + \lambda_1 \geq \max\{\lambda_0, \max_{j \geq 2}w_j v_j + \lambda_j\} = \max\{\lambda_0, \max_{j \geq 2}w_j \epsilon_1 (1 - v) + \lambda_j\}.
        \end{equation*}
        Then for any valuation $v' > v$, we still have that:
        \begin{equation*}
        \begin{aligned}
            w_1 v' + \lambda_1 
            &> w_1 v + \lambda_1  \\
            &\geq  \max\{\lambda_0, \max_{j \geq 2}w_j \epsilon_1 (1 - v) + \lambda_j\}  \\
            &\geq \max\{\lambda_0, \max_{j \geq 2}w_j \epsilon_1 (1 - v') + \lambda_j\}.   
        \end{aligned}
        \end{equation*}
        This means that bidder $1$ will also win the item. 
    \end{proof}
    We consider two cases: 
    (1) Bidder $1$ never wins the item: then the payment will always be lower than the valuation of other bidders and hence is at most $\epsilon_1$.
    (2) Bidder $1$ does not win when its valuation is less than $v^*$ and wins when its valuation is in $[v^*, 1]$.
    Still, the payment collected when its valuation is less than $v^*$ is at most $\epsilon_1 \int_{\epsilon}^{v_*} f(v)dv \leq \epsilon_1$. 
    The payment for the bidder $1$ when it wins is bounded by
    \begin{equation*}
    \begin{aligned}
        p_1^\ama(\bm v; \bm w, \bm \lambda) 
        &= \frac{1}{w_1} \left(\max\{\lambda_0, \lambda_1, \max_{j \geq 2}w_j \epsilon_1 (1 - v) + \lambda_j\} - \lambda_1\right)  \\
        &\leq \frac{1}{w_1} \left(\max\{\lambda_0, \lambda_1, \max_{j \geq 2}w_j \epsilon_1 (1 - v^*) + \lambda_j\} - \lambda_1\right) \\
        &= p_1^\ama((v^*, \epsilon_1 (1 - \bm v^*)); \bm w, \bm \lambda) \leq v^*. 
    \end{aligned}
    \end{equation*}
    The last inequality is derived from the IR property of any AMA.
    Therefore, the upper bound of the payment for $[v^*, 1]$ can be computed by:
    \begin{equation*}
        \int^1_{v^*} v^* f(v) dv = \int^1_{v^*} v^* \frac{\epsilon}{(1 - \epsilon) v^2} dv = v^*\frac{\epsilon}{(1 - \epsilon)}\left(\frac{1}{v^*} - 1\right) \leq \frac{\epsilon}{(1 - \epsilon)}.
    \end{equation*}
    Therefore, the expected payment is bounded by $\frac{\epsilon}{(1 - \epsilon)} + \epsilon_1$. As $\rev_{\mathcal F} = \frac{\epsilon \ln(1 / \epsilon)}{1 - \epsilon}$. For any $\delta$, we can easily set $\epsilon$ and $\epsilon_1$ so that $\rev^\dama_{\mathcal F} < \delta \cdot \rev_{\mathcal F}$.
    
    \underline{Part 3: Showing the relationship between $\rev^\sama_{\mathcal F}$ and $\rev_{\mathcal F}$.}
    
    As we consider the case that the size of the allocation menu is finite, i.e., $|\mathcal A| = S$, $S$ is a constant. 
    Denote the winning allocation as a function to $v_1$, $k(v_1) = \arg \max_{k \in [S]} w_1 v_1 (A_k)_1 + \sum_{j \geq 2} w_j v_j (A_k)_j + \lambda_k = \arg\max_{k \in [S]} w_1 v_1 (A_k)_1 + \sum_{j \geq 2} w_j \epsilon_1 (1 - v_1) (A_k)_j + \lambda_k$. 
    The function must be a piece-wise constant function, and the function has at most $S$ change points by the following lemma.
    
    \begin{lemma}
        For any $(\mathcal A, \bm w, \bm \lambda)$, there is at most $S = |\mathcal A|$ change points of $k(v_1)$. 
    \end{lemma}
    \begin{proof}
        We prove this result by contradiction.
        Assume that there are $S+1$ change points, then, there must be a case that for $v^1_1 < v^2_1 < v^3_1$ such that $k = g(v^1_1) = g(v^3_1)$, $k' = g(v^2_1)$, and $k \neq k'$. 
        Then, by definition of AMA's allocation rule, we have 
        \begin{align}
            w_1 v^1_1 (A_k)_1 + \sum_{j \geq 2} w_j v^1_j (A_k)_j + \lambda_k \geq w_1 v^1_1 (A_{k'})_1 + \sum_{j \geq 2} w_j v^1_j (A_{k'})_j + \lambda_{k'} \tag{a}\\
            w_1 v^2_1 (A_{k'})_1 + \sum_{j \geq 2} w_j v^2_j (A_{k'})_j + \lambda_{k'} \geq w_1 v^2_1 (A_k)_1 + \sum_{j \geq 2} w_j v^2_j (A_k)_j + \lambda_k \tag{b}\\
            w_1 v^3_1 (A_k)_1 + \sum_{j \geq 2} w_j v^3_j (A_k)_j + \lambda_k \geq w_1 v^3_1 (A_{k'})_1 + \sum_{j \geq 2} w_j v^3_j (A_{k'})_j + \lambda_{k'}. \tag{c}
        \end{align}
        Inserting $v_j = \epsilon_1 (1 - v_1) \quad \forall j \geq 2$, by (b) - (a), we have $w_1 ((A_{k'})_1 - (A_k)_1) \geq \epsilon_1 \; \sum_{j \geq 2} w_j ((A_{k'})_j - (A_k)_j)$. By (c) - (b), we have $w_1 ((A_k)_1 - (A_{k'})_1) \geq \epsilon_1 \; \sum_{j \geq 2} w_j ((A_k)_j - (A_{k'})_j)$.
        The only feasible solution is that $(A_k)_j = (A_{k'})_j$ for all $j \in [n]$, which means $A_k = A_{k'}$ and hence brings a contradiction. 
    \end{proof}

    Therefore, we know that there are at most $S$ change points of $g(v_1)$. 
    Suppose these $S' \leq S$ change points are $v_1^0 = \epsilon < v^1_1 < v^2_1 < \cdots < v^{S'}_1 < v_1^{S'+1} = 1$ and the corresponding allocations are $A_0, A_1, A_2, \cdots, A_{S'}$.  
    We only consider the interval $[v_1^0, v_1^1)$.
    If in this interval, $(A_0)_1 < 1$, which means the item is not allocated to bidder $1$ deterministically, then the payment loss compared to optimal revenue is at least $(1 - (A_0)_1) \int_{v_1^0=\epsilon}^{v^1_1}(v - \epsilon_1) \mathrm d v > 0$.
    
    On the other hand, if the allocation satisfies that $(A_0)_1 = 1$. 
    From a similar proof above, we know that the payment in this interval is at most $v_1^{0}$, which will also results in a gap of $\int_{v_1^0=\epsilon}^{v_1^{1}} (v - v_1^0) f(v) \mathrm d v > 0$ compared to the optimal revenue.
    Therefore, in both cases, we can induce that $\rev_{\mathcal F}^{\sama}< \rev_{\mathcal F}$.
\end{proof}

\section{Omitted Proofs in Section~\ref{sec:optimization}}\label{sec:appendix_proofs_optimization}

\continuity*  

\begin{proof}
    For simplicity, we use $\phi$ to represent AMA parameters $(\mathcal A, \bm w, \bm \lambda)$. 
    Specifically, $\mathcal A= \{A_1, A_2, \cdots, A_S\}$, $\bm w = \{w_1, w_2, \cdots, w_n\}$, and $\bm \lambda = \{\lambda_1, \lambda_2, \cdots, \lambda_S\}$. 
    \emph{\textbf{For any matrices $A$, $A'$ (vectors $\bm v$, $\bm v'$), we denote notation $d_1(A, A')$ ($d_1(\bm v, \bm v')$) the $L_1$ distance.}}
    For two $\phi$ and $\phi'$, denote $
    d_1(\phi, \phi') = \sum_{k=1}^S d_1(A_k, A'_k) + d_1(\bm w, \bm w') + d_1(\bm \lambda, \bm \lambda'). 
    $

    Recall that $\asw(k; V, \phi)$ is the affine social welfare given by the $k$-th allocation in $\mathcal A$, which means:
    $
    \asw(k; V, \phi) = \sum_{j=1}^n w_j (v_j \cdot (A_k)_j) + \lambda_k.
    $
    We first show that $\asw(k; V, \phi)$ is continuous w.r.t $\phi$. 
    For any $\phi$, $\epsilon$, $\phi'$ such that $d_1(\phi, \phi') \leq \epsilon$, and $k \in [S]$, let $\bar w:= \max_j w_j$, we have
    \begin{equation*}
        \begin{aligned}
            & \; |\asw(k; V, \phi) - \asw(k; V, \phi')| \\
            &= |\sum_{j=1}^n w_j (v_j \cdot (A_k)_j) + \lambda_k - \sum_{j=1}^n w'_j (v_j \cdot (A'_k)_j) - \lambda'_k| \\
            &= |\sum_{j=1}^n w_j (v_j \cdot (A_k)_j) - \sum_{j=1}^n w_j (v_j \cdot (A'_k)_j) \\
            &+ \sum_{j=1}^n w_j (v_j \cdot (A'_k)_j) - \sum_{j=1}^n w'_j (v_j \cdot (A'_k)_j) + \lambda_k - \lambda'_k |\\
            &\leq \sum_{j=1}^n w_j v_j \cdot ((A_k)_j - (A'_k)_j) + \sum_{j=1}^n |w_j - w'_j| (v_j \cdot (A'_k)_j) + |\lambda_k - \lambda'_k| \\
            &\leq \bar w\sum_{j=1}^n d_1(A_k, A'_k) + m d_1(\bm w, \bm w') + d_1(\bm \lambda, \bm \lambda') \\ 
            & \leq \max\{\bar w, m\} d_1(\phi, \phi') \leq \max\{\bar w, m\} \epsilon.
        \end{aligned}
    \end{equation*}
    This means that the continuity of $\phi$ holds. 

    \underline{\textbf{(1) The continuity with respect to AMA parameters $\phi$.}}
    
    We use $\asw$ to compute a bidder's utility under AMA.
    By the allocation rule and payment rule defined by AMA, there is
    \[
    u_i^\ama(\bm v_i, V; \phi) = \frac{1}{w_i} \left(\max_{k \in [S]} \asw(k; V, \phi) - \max_{k \in [S]} \asw(k; (0, V_{-i}), \phi)\right).
    \]
    And for the target function,
    \[
    p_i^{\mathrm{OPT}-\core}(V_{-i}; \phi) = \inf_{\bm v_i \in \supp(\mathcal F_i(V_{-i}))} u_i^\ama(\bm v_i, (\bm v_i, V_{-i}); \phi).
    \]
    As both $\mathrm{max}$ and $\mathrm{inf}$ operations do not influence the continuity, we can conclude that $p_i^{\mathrm{OPT}-\core}(V_{-i}; \phi)$ is continuous w.r.t. $\phi$ for any $V_{-i}$.

    \underline{\textbf{(2) Continuity in the other bidders’ valuations $V_{-i}$.}}

    Here, the AMA parameters $\phi$ are fixed; we first show that $\asw$ is also continuous to $V$.
    For any $\phi$, $k$, $V$ and $V'$, we have
    \begin{equation*}
        \begin{aligned}
            & \; |\asw(k;V,\phi) - \asw(k;V,\phi)| \\
            &= |\sum_{j=1}^n w_j (\bm v_j \cdot (A_k)_j) + \lambda_k - \sum_{j=1}^n w_j (\bm v'_j \cdot (A_k)_j) - \lambda_k| \\
            &= |\sum_{j=1}^n w_j (\bm v_j \cdot (A_k)_j) - \sum_{j=1}^n w_j (\bm v'_j \cdot (A_k)_j)| \\
            &\leq \sum_{j=1}^n w_j (|\bm v_j - \bm v'_j| \cdot (A_k)_j) = \sum_{j=1}^n w_j \sum_{t=1}^m |\bm v_{jt} - \bm v'_{jt}| (A_k)_{jt} \\
            &= \sum_{t=1}^m \sum_{j=1}^n w_j  |\bm v_{jt} - \bm v'_{jt}| (A_k)_{jt} \leq \sum_{t=1}^m \max_{j} w_j  |\bm v_{jt} - \bm v'_{jt}| \leq \bar w d_1(V, V').
        \end{aligned}
    \end{equation*}

    Then, as the mechanism satisfies DSIC, we will use notation $u_i^{\ama}(\bm v_i,V_{-i};\phi)$ to represent the original $u_i^{\ama}(\bm v_i,(\bm v_i, V_{-i});\phi)$ as bidders' will always truthfully report. 
    As $u_i^{\ama}$ is a maximum of a finite number of continuous functions, for any $\bm v_i$, $\bm v'_i$, $V_{-i}$ and $V'_{-i}$, let $L = \frac{2 \bar w}{w_i}$, we have
    \begin{equation*}
        \lvert u_i^{\ama}(\bm v_i, V_{-i};\phi)-u_i^{\ama}(\bm v'_i, V'_{-i};\phi)\rvert
        \le L\,d_1(\bm v_i, \bm v'_i) + L\,d_1(V_{-i}, V'_{-i}). 
    \end{equation*}

    Now, for two valuation profiles $V_{-i}$, $V'_{-i}$, by definition of $p_i^{\mathrm{OPT}-\core}$, for any $\epsilon > 0$, we can find a $\bm v_i \in \supp \mathcal F_i(V_{-i})$ such that $p_i^{\mathrm{OPT}-\core}(V_{-i}) \leq u_i^{\ama}(\bm v_i, V_{-i};\phi) \leq p_i^{\mathrm{OPT}-\core}(V_{-i}) + \epsilon$.
    By the Hausdorff assumption on $\supp \mathcal F_i(V_{-i})$ and $\supp \mathcal F_i(V'_{-i})$, we can find another $\bm v'_i \in \supp \mathcal F_i(V'_{-i})$, such that 
    $
    d_1(\bm v_i, \bm v'_i) \le C_H d_1(V_{-i}, V'_{-i}). 
    $
    Therefore, we can bound the gap in the values 
    \begin{equation*}
        \begin{aligned}
        &p_i^{\mathrm{OPT}-\core}(V_{-i};\phi)
        \ge
        u_i^{\ama}\!\bigl(\bm v_i,(\bm v_i,V_{-i});\phi\bigr) - \epsilon\\
        &\ge
        u_i^{\ama}\!\bigl(\bm v'_i,(\bm v'_i,V'_{-i});\phi\bigr)
        - L\,d_1(\bm v_i, \bm v'_i)
        - L\,d_1(V_{-i}, V'_{-i}) - \epsilon\\
        &\ge
        u_i^{\ama}\!\bigl(\bm v'_i,(\bm v'_i,V'_{-i});\phi\bigr)
        - L (C_H + 1)\,d_1(V_{-i}, V'_{-i}) - \epsilon\\
        &\ge
        p_i^{\mathrm{OPT}-\core}(V'_{-i};\phi)
        - \epsilon
        - L (C_H + 1)\,d_1(V_{-i}, V'_{-i}).
        \end{aligned}
    \end{equation*}
    It is obvious that the vice is also correct, so we can conclude that:
    \[
    \begin{aligned}
        &\; |p_i^{\mathrm{OPT}-\core}(V_{-i};\phi) - p_i^{\mathrm{OPT}-\core}(V'_{-i};\phi)| \\
        &\leq \epsilon + L (C_H + 1)\,d_1(V_{-i}, V'_{-i}) = \epsilon + \frac{2 \bar w}{w_i}(C_H + 1)\,d_1(V_{-i}, V'_{-i}).
    \end{aligned}
    \]
    As $\epsilon$ can be chosen sufficiently small, this means that $p_i^{\mathrm{OPT}-\core}(\cdot;\phi)$ is $\frac{2 \bar w}{w_i}(C_H + 1)$-continuous w.r.t. $V_{-i}$ under $L_1$ distance for any fixed $\phi$ under $C_H$-Hausdorff assumption.
\end{proof}

\begin{theorem}[Uniform generalization bound for a 3-layer payment network]\label{thm:gen-bound}
Let $\mathcal F$ be an arbitrary distribution over valuation profiles $V\in[0,1]^{n\times m}$.
For parameters $\theta=(W_{1},W_{2},W_{3})$ satisfying
$\|W_{\ell}\|_{2}\le M_{\ell}$ for $\ell=1,2,3$,
consider $\regretir(V)\;=\; \sum_{i=1}^n \max \{0, p^{\core}_i(V_{-i};\theta) - u^\ama_i(\bm v_i, V) \}$,
where the payment network
$p^{\core}_i(\,\cdot\,;\theta):\mathbb R^{(n-1)m}\!\to\mathbb R$
is the depth-3 ReLU network
$
p_i(x;\theta)=W_{3}\,\sigma\!\bigl(W_{2}\,\sigma(W_{1}x)\bigr)
$
with $\sigma(z)=\max\{0,z\}$.

Let $B_x \;=\;\sqrt{(n-1)m}$, and $B_p\;=\;B_x\,\prod_{\ell=1}^{3}M_\ell$.
For any i.i.d. sample
$D=\{V^{(1)},\dots,V^{(K)}\}\sim\mathcal F^{K}$
and any confidence level $\delta\in(0,1)$,
with probability at least $1-\delta$ (over the draw of $D$)
\emph{the following inequality holds simultaneously for every} choice of parameters $\theta$:

\begin{equation*}
    \begin{aligned}
    \sup_{\theta}\!
    \Bigg|
    \frac1K\sum_{k=1}^K \regretir(V^{(k)}; \theta)
    - \mathbb E \regretir(V; \theta)
    \Bigg| \le
    2 n \frac{B_p \sqrt{2\log(2d)}}{\sqrt{K}}
    +
    n B_p \sqrt{\frac{\log(2/\delta)}{2K}},
    \end{aligned}
\end{equation*}
where $d=\max\{(n-1)m,h_1,h_2,1\}$ and
$h_1,h_2$ are the widths of the first and second hidden layers.
\end{theorem}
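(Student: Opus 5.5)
This is a standard Rademacher-complexity argument. I would first bound the range of the loss, then apply McDiarmid's inequality to the supremum deviation, symmetrize, and finally peel off the outer hinge and the network layers by contraction. The bound follows the shape of the stated result: a Rademacher term of order $B_p\sqrt{\log d/K}$ plus a concentration term $B_p\sqrt{\log(2/\delta)/K}$, each multiplied by $n$ because of the sum over bidders.

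\textbf{Step 1 (range).} For $x=V_{-i}\in[0,1]^{(n-1)m}$ we have $\|x\|_2\le B_x$. Since $\sigma$ is $1$-Lipschitz with $\sigma(0)=0$, it follows that $|p_i^\core(x;\theta)|\le B_x\prod_\ell M_\ell=B_p$. The AMA utility $u_i^\ama$ is fixed (it does not depend on $\theta$), nonnegative by IR of AMA, and bounded since valuations lie in $[0,1]$. Each summand $\max\{0,p_i^\core-u_i^\ama\}$ therefore lies in $[0,B_p]$, because $u_i^\ama\ge0$. Summing over bidders, $\regretir(V;\theta)\in[0,nB_p]$.

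\textbf{Step 2 (concentration).} Let $\Phi(D)=\sup_\theta|\frac1K\sum_k\regretir(V^{(k)};\theta)-\mathbb E\regretir|$. Changing one sample changes $\Phi$ by at most $nB_p/K$. McDiarmid's inequality then gives $\Phi\le\mathbb E\Phi+nB_p\sqrt{\log(2/\delta)/(2K)}$ with probability $1-\delta$; the two-sided absolute value accounts for the factor $2/\delta$. Standard symmetrization gives $\mathbb E\Phi\le 2\,\mathfrak R_K(\mathcal G)$, where $\mathcal G=\{V\mapsto\regretir(V;\theta)\}$.

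\textbf{Step 3 (Rademacher complexity).} By subadditivity of Rademacher complexity over the sum in $i$,
\[
\mathfrak R_K(\mathcal G)\le\sum_i\mathfrak R_K\bigl(\{V\mapsto\max\{0,p_i^\core(V_{-i};\theta)-u_i^\ama(V)\}\}\bigr).
\]
The map $t\mapsto\max\{0,t-c(V)\}$ is $1$-Lipschitz in $t$ for each fixed $V$, with the shift $c(V)=u_i^\ama(V)$ independent of $\theta$. Talagrand's contraction lemma, in its per-sample version that allows sample-dependent $1$-Lipschitz maps, removes the hinge. This reduces each term to $\mathfrak R_K(\{x\mapsto p_i^\core(x;\theta)\})$.

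\textbf{Main obstacle.} The remaining step is to bound the complexity of the depth-3 network by $B_p\sqrt{2\log(2d)}/\sqrt K$ with only a logarithmic width dependence. Naively peeling layers with contraction costs a factor $2$ per ReLU and needs vector-valued contraction. My first attempt would be the Golowich--Rakhlin--Shamir style exponential-moment peeling, $\mathbb E\sup\exp(\lambda\sum_k\epsilon_k p(x_k))$, layer by layer: each layer contributes its norm $M_\ell$ and a factor $2$ inside the exponent. At the input this ends with a finite maximum over $2d$ signed coordinates, which gives the $\sqrt{2\log(2d)}$ term. If the spectral-norm version does not close cleanly, I would fall back to the sup over coordinates, using $\|x\|_\infty\le1$ and Massart's finite-class lemma at the input layer. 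In either case the constants will be set so that each bidder contributes $B_p\sqrt{2\log(2d)/K}$. Combining with Steps 1–2 yields the stated inequality.
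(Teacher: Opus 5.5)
Your Steps 1--3 match the paper's proof: the range bound $0\le\regretir\le nB_p$, concentration plus symmetrization, and subadditivity over bidders followed by contraction of the $1$-Lipschitz hinge. The paper then finishes by \emph{asserting} $\widehat R_K(\mathcal P_i)\le B_x\bigl(\prod_{\ell}M_\ell\bigr)\sqrt{2\log(2d)}/\sqrt K$ for spectral-norm-bounded depth-3 ReLU networks, without proof or a precise reference. That is exactly the step you single out as the main obstacle. The gap is genuine, and neither of your routes can close it, because under spectral-norm constraints alone this bound is false.

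\textbf{Why your routes fail.} The Golowich--Rakhlin--Shamir peeling needs Frobenius norms. Its key step, $\sup_{\|W\|_F\le M}\|\sum_k\epsilon_k\sigma(Wz_k)\|_2\le M\sup_{\|w\|_2\le 1}|\sum_k\epsilon_k\sigma(w^\top z_k)|$, uses that the squared row norms of $W$ sum to $\|W\|_F^2$. With only $\|W\|_2\le M$ you lose a factor $\sqrt{\operatorname{rank}W}$ per layer, and the method produces a $\sqrt{\mathrm{depth}}$ factor, not $\sqrt{\log d}$. The $\sqrt{2\log(2d)}$ form comes from Massart over $2d$ signed coordinates, i.e.\ from max-row-$\ell_1$ constraints $\|W_\ell\|_{1,\infty}$ together with $\|x\|_\infty\le 1$. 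Converting via $\|W_\ell\|_{1,\infty}\le\sqrt{\text{fan-in}}\,\|W_\ell\|_2$ gives exactly $B_x=\sqrt{(n-1)m}$ for the first layer, but an extra $\sqrt{h_1h_2}$ for the hidden layers.

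\textbf{Counterexample.} Let $D=(n-1)m$ be a power of $2$, let $h_2,\dots,h_D$ be the non-constant columns of a Sylvester Hadamard matrix of order $D$, and set $x_j=(\mathbf 1+h_j)/2\in\{0,1\}^D$.
\begin{itemize}
\item Given signs $s_j\in\{\pm1\}$, let $c_j\in\mathbb R^{2D}$ be a normalized non-constant Hadamard column of order $D$, placed in the first half of the coordinates if $s_j=+1$ and in the second half if $s_j=-1$ (distinct columns within each half).
\item Set $W_1=D^{-1/2}\sum_j c_jh_j^\top$, $W_2=I_{2D}$, and $W_3=(2D)^{-1/2}(\mathbf 1_{\mathrm{first}}-\mathbf 1_{\mathrm{second}})^\top$.
\item All three spectral norms equal $1$, $W_1x_j=\tfrac{\sqrt D}{2}c_j$, and hence $p(x_j;\theta)=s_j\sqrt D/(4\sqrt2)$.
\item Take $K/(D-1)$ copies of each $x_j$. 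Khintchine gives $\widehat R_K(\mathcal P_i)\ge\sqrt{D(D-1)}/(8\sqrt K)$.
\item This exceeds the claimed $\sqrt{D}\sqrt{2\log(4D)}/\sqrt K$ (here $d=2D$) as soon as $D-1>128\log(4D)$, e.g.\ $D=2048$.
\end{itemize}
The same construction breaks the theorem itself. Take two bidders with $\bm v_1\equiv 0$ and $\bm v_2$ uniform on $\{x_j\}$, so that $\regretir=\max\{0,p^{\core}_1(\bm v_2)\}$. Choosing $s_j$ by the sign of the empirical-frequency error gives, w.h.p.\ for $K\gg D$, a uniform deviation of order $D/\sqrt K$. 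The right-hand side is only of order $\sqrt{D\log D}/\sqrt K$, so the stated inequality fails once $(n-1)m$ is large.

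A correct version along your route needs a different hypothesis or must carry explicit width factors. Options are Frobenius norms, giving an $O(B_x\prod_\ell\|W_\ell\|_F/\sqrt K)$ term; $\ell_{1,\infty}$ norms with Massart at the input; or spectral norms with factors such as $\sqrt{h_1h_2}$.

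\textbf{A minor slip.} For the two-sided supremum, symmetrization yields $2\,\mathbb E\sup_\theta|\frac1K\sum_k\epsilon_k\regretir(V^{(k)};\theta)|$, not $2\mathfrak R_K$. To get $2\mathfrak R_K$, run McDiarmid and symmetrization on the two one-sided suprema separately and union-bound; that is where the $\log(2/\delta)$ actually comes from.
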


\begin{proof}

Since every valuation component lies in $[0,1]$,
$\|V_{-i}\|_2\le B_x=\sqrt{(n-1)m}$.
For ReLU networks, the operator norm is non-expansive, hence, $|p^\core_i(V_{-i};\theta)|
\le\|W_3\|_2\,\|W_2\|_2\,\|W_1\|_2\,\|V_{-i}\|_2
\le B_p$. 
Together with $0\le u_i(\bm v_i, V)\le m$ we therefore have $0 \leq \regretir(V; \theta) \leq n B_p$.

Let $\mathcal P=\{\regretir(V; \theta)\colon\theta\in\Theta\}$.
By standard symmetrisation (see, e.g., \emph{Bartlett \& Mendelson, 2002}),
for any fixed sample $D$
\begin{equation*}
\begin{aligned}
    \sup_{\theta}\!
    \Bigg|
    \frac1K\sum_{k=1}^K \regretir(V^{(k)}; \theta)
    - \mathbb E_{V \sim \mathcal F} [\regretir(V; \theta)]
    \Bigg| \le
    2\,\widehat R_{K}(\mathcal P)
    +
    n B_p \sqrt{\frac{\log(2/\delta)}{2K}}
\end{aligned}
\end{equation*}

with probability $\ge 1-\delta$, where $\widehat{R}_K$ is the empirical
Rademacher complexity.

Let $\mathcal P_i = \{p^\core_i(V_{-i};\theta) \colon \theta \in \Theta\}$ be the function class for a single payment component.
For a depth-3 ReLU network with spectral-norm bounds $M_\ell$, we have
\[
\widehat{R}_K(\mathcal P_i)
\;\le\;
\frac{B_x\bigl(\prod_{\ell=1}^{3}M_\ell\bigr)\,
\sqrt{2\log(2d)}}{\sqrt{K}},
\]
where $d=\max\{(n-1)m,h_1,h_2,1\}$ and
$h_1,h_2$ are the widths of the first and second hidden layers.

Since $\regretir(V; \theta) = \sum_{i=1}^n \max\{0, p^\core_i(V_{-i};\theta) - u_i^\ama(\bm v_i, V)\}$ and $\max\{0, \cdot\}$ is 1-Lipschitz, we have:
\[
\widehat{R}_K(\mathcal P) \le \sum_{i=1}^n \widehat{R}_K(\{p^\core_i(V_{-i};\theta)\}) = n \cdot \widehat{R}_K(\mathcal P_i).
\]
Substituting the bound for $\widehat{R}_K(\mathcal P_i)$:
\[
\widehat{R}_K(\mathcal P) \;\le\; n \frac{B_p \sqrt{2\log(2d)}}{\sqrt{K}}.
\]

Finally, substituting the complexity estimate for $\widehat{R}_K(\mathcal P)$ finishes the proof.
\end{proof}

\begin{remark}[Fixed network]
If $\theta$ is treated as \emph{fixed} (e.g.\ after training),
\emph{Hoeffding's inequality} immediately gives the simpler bound
\[
\left|
\frac1K\sum_{k} f_{i,\theta}(V^{(k)})
-\mathbb Ef_{i,\theta}(V)
\right|
\le
n B_p \sqrt{\frac{\log(2/\delta)}{2K}},
\]
so the capacity term vanishes.
\end{remark}

\section{Algorithm of CA-AMA}\label{sec:algorithm_caama}
We present the detailed algorithm description for classic randomized AMA optimization methods, including LotteryAMA~\cite{LotteryAMA} and AMenuNet~\cite{AMenuNet} in Algorithm~\ref{alg:unified_ca_ama}. 
For the $\mathrm{softmax}$ version of AMA, given a valuation profile $V$, the AMA parameters $(\mathcal A, \bm w, \bm\lambda)$ and temperature $T$, the approximated allocation is calculated as follows, 
\begin{equation*}
    \begin{aligned}
        \hat g^\ama(V) = \sum_{A \in \mathcal A} \frac{e^{\asw(A; V) \cdot T}}{\sum_{A' \in \mathcal A} e^{\asw(A'; V) \cdot T}} A, \\
        \hat g^\ama_{-i}(V) = \sum_{A \in \mathcal A} \frac{e^{\asw_{-i}(A; V) \cdot T}}{\sum_{A' \in \mathcal A} e^{\asw_{-i}(A'; V) \cdot T}} A.
    \end{aligned}
\end{equation*}
$\asw(k; V)$ is defined as $\sum_{j=1}^n w_j \bm v_j \cdot (A_k)_j + \lambda_k$ and $\asw_{-i}(k; V)$ is $\sum_{j=1, j \neq i}^n w_j \bm v_j \cdot (A_k)_j + \lambda_k$.
Based on that, the payment and utility for bidder $i$ is:
\begin{equation}\label{eqn:softmax}
    \begin{aligned}
        \hat p_i^\ama(V) &= \frac1{w_i} \left(\asw_{-i}(\hat g^\ama_{-i}(V); V) - \asw_{-i}(\hat g^\ama(V); V) \right),  \\
        \hat u_i^\ama(V) &= \bm v_i \cdot \hat g^\ama(V)_i - \hat p_i^\ama(V).
    \end{aligned}
\end{equation}
Note that in this approximated version, all operations are differentiable to the AMA parameters $(\mathcal A, \bm w, \bm\lambda)$.
For other notations and equations, please refer to Section~\ref{sec:optimization}. 

\begin{algorithm}[htbp]
\caption{Unified CA-AMA Optimization Framework}
\label{alg:unified_ca_ama}
\begin{algorithmic}[1]
\REQUIRE Data generator $\mathcal G$, initial parameters $\theta$ (and optionally $\phi$), total iterations $T$, sample size $K$, training mode $\texttt{mode} \in \{\texttt{baseline}, \texttt{mutual}, \texttt{post}\}$, \\
\hspace*{1.2em} (if $\texttt{mode} \neq \texttt{baseline}$): hyperparameters $\gamma$, $\gamma_\Delta$, $R_{\text{target}}$, $\bar\gamma$.
\STATE Initialize neural network $p^\theta$ (AMA parameters).
\IF{$\texttt{mode} \neq \texttt{baseline}$}
    \STATE Initialize neural network $p^\phi$ (correlation-aware payments).
    \STATE Set initial penalty strength $\gamma$.
\ENDIF
\FOR{$t=1$ to $T$}
    \STATE Generate dataset $D = \{V^{(1)}, V^{(2)}, \dots, V^{(K)}\}$ by $\mathcal G$.
    \STATE Get $\mathcal A$, $\bm w$, and $\bm \lambda$ from $p^\theta$.
    \FOR{$i=1$ to $n$}
        \IF{$\texttt{mode} = \texttt{post}$}
            \STATE Compute \textbf{exact} AMA payment $p_i^{\ama}$ and utility $u_i^{\ama}$ using true $\argmax$.
        \ELSE
            \STATE Approximate AMA payment $\hat{p}_i^{\ama}$ and utility $\hat{u}_i^{\ama}$ using $\mathrm{softmax}$.
        \ENDIF
        \IF{$\texttt{mode} \neq \texttt{baseline}$}
            \STATE \textcolor{blue}{Get correlation-aware payment $p_i^\core$ by $p^\phi$.}
        \ENDIF
    \ENDFOR
    \STATE \textcolor{blue}{Compute loss by Eqn~(\ref{loss_function})}    
    \IF{$\texttt{mode} = \texttt{baseline}$}
        \STATE Update $p^\theta$ by gradient descent on $\mathcal{L}$.
    \ELSIF{$\texttt{mode} = \texttt{post}$}
        \STATE \textcolor{blue}{Freeze $p^\theta$; update only $p^\phi$ by gradient descent on $\mathcal{L}$.}
    \ELSE
        \STATE \textcolor{blue}{Update both $p^\theta$ and $p^\phi$ by gradient descent on $\mathcal{L}$.}
    \ENDIF
    \IF{$\texttt{mode} \neq \texttt{baseline}$}
        \STATE \textcolor{blue}{Update penalty strength $\gamma$ with Eqn~(\ref{eq:gamma_update}).}
    \ENDIF
\ENDFOR
\end{algorithmic}
\end{algorithm}

\section{Comparison with Fully Expressive Baselines}\label{sec:appendix_gemnet}

We provide a detailed qualitative comparison between CA-AMA and fully expressive baseline mechanisms, particularly GemNet~\cite{GemNet} and RegretNet~\cite{dutting2023optimal}, to clarify the positioning of our work within the broader landscape of differentiable mechanism design.

\paragraph{CA-AMA as a Menu-Based Mechanism.}
CA-AMA can be interpreted as a specialized menu-based mechanism. When the ex-post opt-out option is introduced, CA-AMA essentially offers each bidder a menu with exactly \textbf{two options}: (1) the allocation and payment derived by the CA-AMA network, or (2) zero allocation for zero payment. While general menu-based mechanisms (e.g., GemNet) can theoretically express any DSIC and IR mechanism, CA-AMA's simplified two-option structure means it may not capture the absolute optimal mechanism across all possible distributions.

\paragraph{Practical Advantages over Fully Expressive Mechanisms.}
We emphasize two crucial practical advantages that CA-AMA holds over mechanisms with full theoretical expressiveness:

\begin{enumerate}
    \item \textbf{Versus RegretNet:}~\citep{dutting2023optimal} Because CA-AMA intrinsically guarantees DSIC by construction, it entirely avoids the computationally expensive process of calculating IC-regret during training. RegretNet requires sampling alternative bids to approximate incentive violations, which incurs significant computational overhead.
    \item \textbf{Versus GemNet:}~\citep{GemNet} CA-AMA naturally satisfies strict allocation feasibility without requiring the complex integer programming or post-processing steps necessary in GemNet. GemNet's use of over-allocation penalties and subsequent mixed-integer linear programming for feasibility restoration makes it significantly more difficult to optimize and slower at runtime.
\end{enumerate}

\paragraph{Positioning on the Pareto Front.}
While CA-AMA trades off full theoretical expressiveness, it consistently achieves highly competitive empirical revenue across diverse distributions, as demonstrated in our experiments. Consequently, we view CA-AMA as occupying a valuable position on the Pareto front between theoretical expressiveness, strict incentive guarantees, and empirical performance.

\section{Further Experimental Descriptions}\label{sec:appendix_experiment}
In this section, we presents more implementation details and further experimental results.

\subsection{Implementation Details}\label{subsec:appendix_implementation}
As we have introduced in Section~\ref{sec:experiment}, most hyperparameters are the same for all settings as our method shows robustness in different auction environments.
Only two hyperparameters vary for different settings: the initial penalization term $\gamma_0$ and the menu size $|\mathcal A|$. 
In Table~\ref{tab:hyperparams}, we present the choices taken in our experiments, and the total training time for different auction settings ($n$ and $m$). 
As the implementation of CA-AMA only adds a computation for the $\regretir$ term and the correlation-aware payment is represented by simply a three-layer MLP, the training time does not significantly increase compared to~\cite{AMenuNet}.

\paragraph{Generator of AMA Parameters}
This part briefly describes how the AMA parameters are over-parameterized by neural networks in our implementation. 
We refer the reader to AMenuNet~\cite{AMenuNet} for more details as our innovation is conceptually orthogonal to the architecture design of optimizing AMA parameters. 
Specifically, we use some padding as input to a transformer-based neural network to generate an intermediate representation. 
Then, the allocation matrices $\mathcal A$ and the weights $\bm w$ are directly induced by the representation by reshaping and normalization. 
And the boost variables $\bm \lambda$ are generated by another MLP based on the representation. 
Empirical results in \citet{AMenuNet} show that this over-parameterization technique can significantly improve the performance compared to directly optimizing the AMA parameters.

\begin{table}[t]
    \centering
    \caption{
        Hyperparameters and training times of CA-AMA and Randomized AMA methods.
    }   
    \begin{subtable}[h]{\linewidth}
        \centering
        \adjustbox{max width=\linewidth}{
        \begin{tabular}{lccccc}
            \toprule
            Hyperparameter & 2$\times$2 & 5$\times$2 & 8$\times$2 & 10$\times$2 & 2$\times$3 \\
            \midrule
            Initial penalization term $\gamma_0$ & 3 & 6 & 6 & 8 & 5\\
            Menu size $|\mathcal A|$ & 32 & 64 & 128 & 256 & 64 \\
            CA-AMA training time (min) & 20 & 26 & 40 & 47 & 22\\
            AMenuNet training time (min) & 19& 23& 33& 40& 20\\
            \bottomrule
    \end{tabular}
    }
    \end{subtable}
    \begin{subtable}[h]{\linewidth}
        \centering
        \adjustbox{max width=\linewidth}{
        \begin{tabular}{lccccc}
            \toprule
            Hyperparameter & 5$\times$3 & 8$\times$3 & 10$\times$3 & 2$\times$5 & 5$\times$5\\
            \midrule
            Initial penalization term $\gamma_0$ & 6 & 8 & 8 & 3 & 10 \\
            Menu size $|\mathcal A|$ & 1024 & 2048 & 2048 & 256 & 2048 \\
            CA-AMA training time (min) & 40& 80& 90& 27& 70\\
            AMenuNet training time (min) & 40& 75& 85& 24& 65\\
            \bottomrule
    \end{tabular}
    }
    \end{subtable}
    \label{tab:hyperparams}
\end{table}
\subsection{Further Experimental Results}\label{subsec:appendix_further_experiments}

We present more experimental results for the auction setting with linearly correlated equal revenue distributions, as described in Section~\ref{sec:experiment-linear}.
As is shown in Figure~\ref{fig:equal_revenue_full}, the CA-AMA mechanism consistently achieves optimal revenue across different values of $\epsilon$, significantly outperforming the Randomized AMA mechanism implemented by AMenuNet~\cite{AMenuNet}.

\begin{figure}[t]
    \centering
    \includegraphics[width=0.85\linewidth]{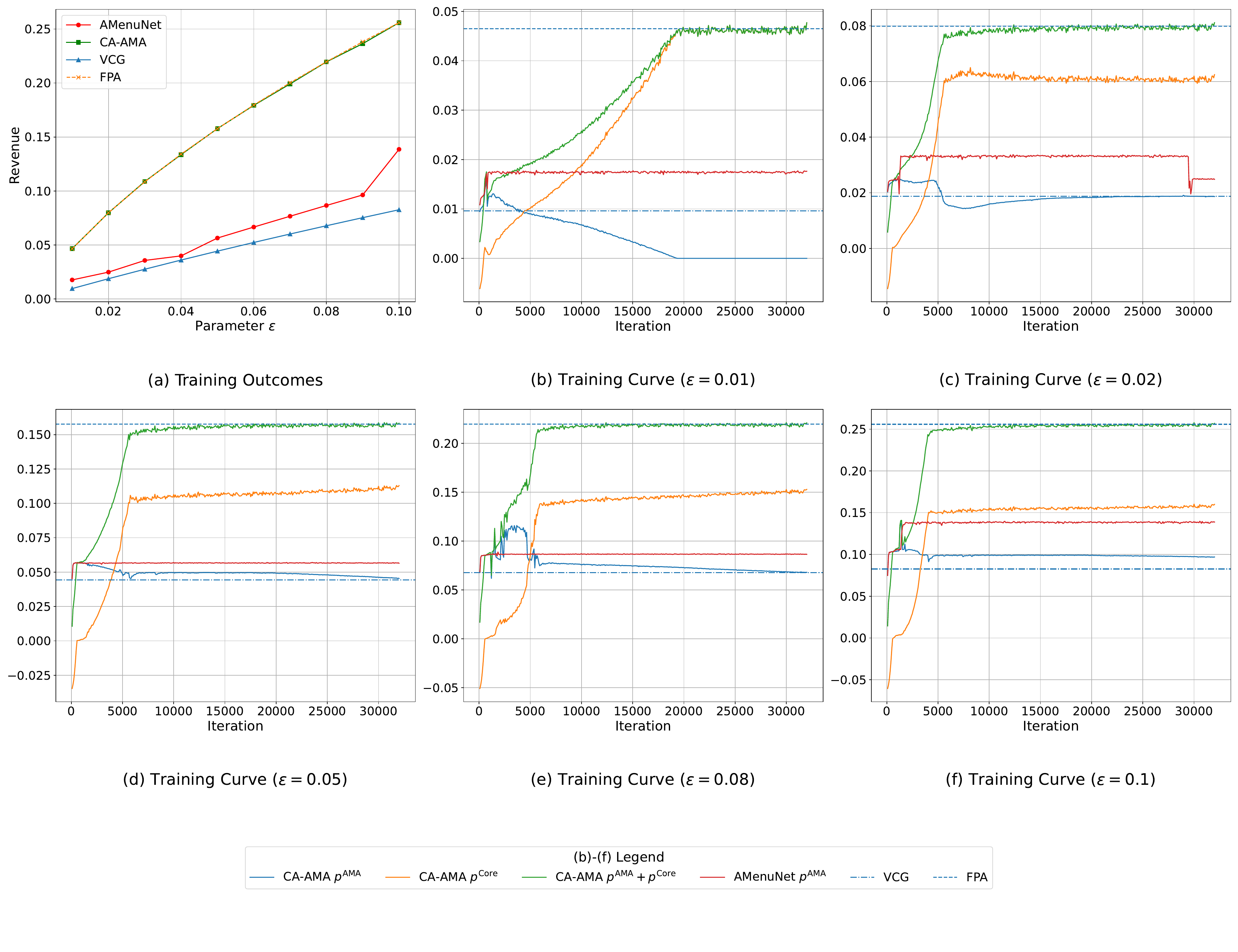}
    \caption{The revenue results and training curves of CA-AMA and Randomized AMA (implemented by AMenuNet~\cite{AMenuNet}) in auctions with the first bidder's valuation $v_1$ following equal revenue distribution on $[\epsilon, 1]$ and the second bidder's valuation $v_2 = \frac{\epsilon}{1 - \epsilon} (1 - v_1)$. As the $\regretir$ in all cases is less than $1e-5$, it is not plotted in the figure. }
    \label{fig:equal_revenue_full}
\end{figure}

Furthermore, we investigate the impact of the target level of IR regret, $R_{\text{target}}$, on the revenue achieved by our optimized CA-AMA mechanism. 
Experiments are conducted in a $2$-bidder $2$-item auction setting with irregular multivariate normal value distributions, as described in detail in Section~\ref{sec:experiment}. 
We evaluate $R_{\text{target}}$ for values in the set $\{0.05, 0.02, 0.01, 0.005, 0.002, 0.001, 0.0005, 0.0001\}$. Figure~\ref{fig:regret} presents the average revenue and the achieved IR regret over $5$ independent test runs for CA-AMA at each target regret level. 
For comparison, the revenue achieved by Randomized AMA, VCG, and Item-CAN is also included.

Firstly, we observe that after training, the achieved IR regret for CA-AMA is consistently close to the specified target value, even for very small targets like $R_{\text{target}} = 0.0001$. This demonstrates the effectiveness of our training algorithm in steering the mechanism towards a desired level of IR compliance, mitigating the significant IR violations that can occur with standard AMA approaches.
Secondly, as $R_{\text{target}}$ approaches $0$, the revenue obtained by CA-AMA tends to decrease. Nevertheless, CA-AMA consistently yields higher average revenue than Randomized AMA across all tested target regret levels.

\begin{figure}[ht]
    \centering
    \includegraphics[width=0.5\linewidth]{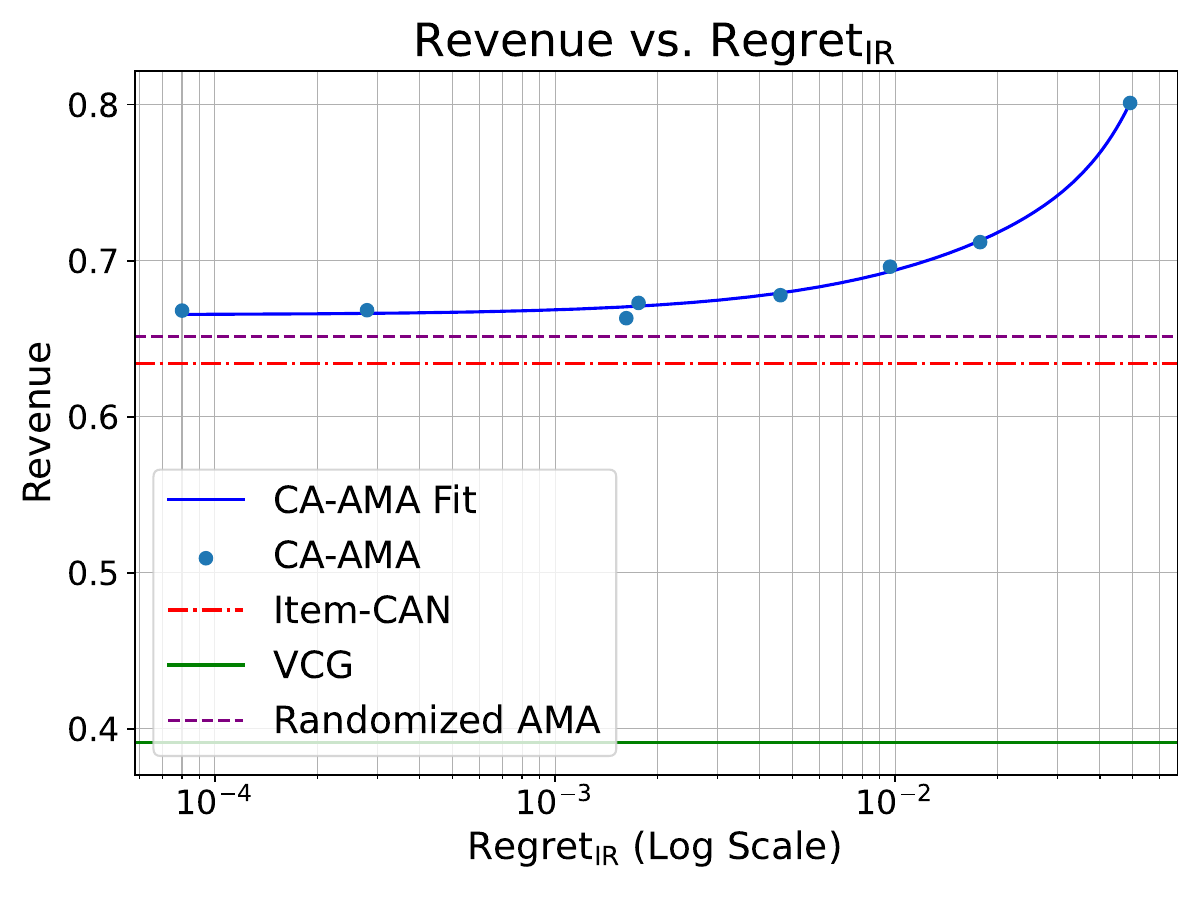}
    \caption{Average revenue vs. achieved IR regret for the optimized CA-AMA under different target IR regret ($R_{\text{target}}$). Results are averaged over $5$ test runs in a $2$-bidder, $2$-item auction setting with irregular multivariate normal value distributions. Revenue obtained by Randomized AMA, VCG, and Item-CAN is included for comparison.}
    \label{fig:regret}
\end{figure}

\end{document}